\newtheorem{observation}{Observation}
\newcommand{\norm}[1]{\left|#1\right|}          % norm
\newcommand{\nT}[0]{\overline{T}}               % complement of triple set
\newcommand{\say}[1]{``#1''} %place quotes around something
\DeclareMathOperator{\occ}{count}
\newcommand{\dd}{\mbox{---}}
\newcommand{\repeattheorem}[1]{%
    \begingroup
    \renewcommand{\thetheorem}{\ref{#1}}%
    \expandafter\expandafter\expandafter\theorem
    \csname reptheorem@#1\endcsname
    \endtheorem
    \endgroup
}
\xdef\csname reptheorem@#1\endcsname{%
        \unexpanded\expandafter{\BODY}%
    }%
\unskip\label{#1}\endtheorem
\newcommand{\repeatlemma}[1]{%
    \begingroup
    \renewcommand{\thelemma}{\ref{#1}}%
    \expandafter\expandafter\expandafter\lemma
    \csname replemma@#1\endcsname
    \endlemma
    \endgroup
}
\xdef\csname replemma@#1\endcsname{%
        \unexpanded\expandafter{\BODY}%
    }%
\unskip\label{#1}\endlemma
\newcommand{\repeatobservation}[1]{%
    \begingroup
    \renewcommand{\theobservation}{\ref{#1}}%
    \expandafter\expandafter\expandafter\observation
    \csname repobservation@#1\endcsname
    \endobservation
    \endgroup
}
\xdef\csname repobservation@#1\endcsname{%
        \unexpanded\expandafter{\BODY}%
    }%
\unskip\label{#1}\endobservation
\renewcommand{\P}{\mathbb{P}}
\newcommand{\E}{\mathbb{E}}
\title{Reconstructing Graphs from Connected Triples\thanks{LC is supported by the Institute for Basic Science (IBS-R029-C1) and CG by 
Marie-Skłodowska Curie grant GRAPHCOSY (number 101063180). MvK and JV are supported by the Netherlands Organisation for Scientific Research (NWO) under
project no. 612.001.651.}}
\author{Paul Bastide\inst{1}
% \orcidID{0000-0002-5606-1430} 
\and Linda Cook\inst{2} \and Jeff Erickson\inst{3} \and Carla Groenland\inst{4} \and \\ Marc van Kreveld\inst{4} \and Isja Mannens\inst{4} \and Jordi L. Vermeulen\inst{4}}
\institute{
LaBRI - Bordeaux University\\
\email{paul.bastide@ens-rennes.fr}\and
Institute for Basic Science, Discrete Math Group, Republic of Korea\\
\email{lindacook@ibs.re.kr}\and
    University of Illinois at Urbana-Champaign\\
    \email{jeffe@illinois.edu}\and
    Utrecht University\\
    \email{\{c.e.groenland,m.j.vankreveld,i.m.e.mannens\}@uu.nl}
}
\date{March 12, 2022}
\begin{document}
\maketitle
\begin{abstract}
We introduce a new model of indeterminacy in graphs: instead of specifying all the edges of the graph, the input contains all triples of vertices that form a connected subgraph.
In general, different (labelled) graphs may have the same set of connected triples, making unique reconstruction of the original graph from the triples impossible.
We identify some families of graphs (including triangle-free graphs) for which all graphs have a different set of connected triples.
We also give algorithms that reconstruct a graph from a set of triples, and for testing if this reconstruction is unique.
Finally, we study a possible extension of the model in which the subsets of size $k$ that induce a connected graph are given for larger (fixed) values of $k$.
\keywords{Algorithms \and Graph reconstruction \and Indeterminacy \and \\Uncertainty \and Connected Subgraphs}
\end{abstract}

\section{Introduction}
Imagine that we get information about a graph, but not its complete structure by a list of edges.
Does this information uniquely determine the graph?
In this paper we explore the case where the input consists of all triples of vertices whose induced subgraph is connected.
In other words, we know for each given triple of vertices that two or three of the possible edges are present, but we do not know which ones.
We may be able to deduce the graph fully from all given triples.

As a simple example, assume we are given the (unordered, labelled) triples \(abc\), \(bcd\), and \(cde\).
Then the only (connected) graph that matches this specification by triples is the path \(a\)---\(b\)---\(c\)---\(d\)---\(e\).
On the other hand, if we are given all possible triples on a set of four vertices \(a,b,c,d\) except for \(abc\), then there are several graphs possible. 
We must have the edges \(ad\), \(bd\), and \(cd\), and zero or one of the edges \(ab\), \(bc\), and~\(ca\). See Figure \ref{fig:label-ambiguity} for another example.

\begin{figure}[t]
    \centering
    \includegraphics{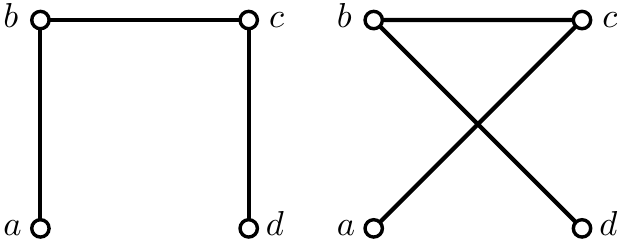}
    \caption{Two different labelled trees that give the same set of connected triples.}
    \label{fig:label-ambiguity}
\end{figure}

This model of indeterminacy of a graph does not use probability and is perhaps the simplest combinatorial model of partial information.
Normally a graph is determined by pairs of vertices which are the edges; now we are given triples of vertices with indeterminacy on the edges between them.
As such, we believe this model is interesting to study.

As illustrated in our previous example, there are cases where reconstruction of the graph from the set \(T\) of triples is unique and there are cases where multiple (labelled) graphs may have the same set of triples.
There are also cases where \(T\) is not consistent with any graph, such as \(T=\{abc,cde\}\).
Can we characterize these cases, and what can we say if we have additional information, for example, when we know that we are reconstructing a tree or a triangle-free graph?

\subsection{Our results}
After preliminaries in \cref{sec:prelim}, we provide two relatively straightforward, general algorithms for reconstruction in \cref{sec:algorithm}.
One runs in \(O(n^3)\) time when the triples use \(n\) vertex labels, and the other runs in \(O(n\cdot |T|)\) time when there are \(|T|\) triples in the input.
These algorithms return a graph that is consistent with the given triples, if one exists, and decide on uniqueness.

Then, we give an \(O(|T|)\) time algorithm to reconstruct trees on at least five vertices, provided that the unknown graph is known to be a tree, in \cref{sec:trees}.
In fact, all triangle-free graphs can be reconstructed, provided we know that the unknown graph is triangle-free. We give an algorithm running in expected $O(|T|)$ time for this in \cref{sec:unique-graphs}.
Moreover, we show that 2-connected outerplanar graphs and triangulated planar graphs can be uniquely reconstructed.

In \cref{sec:extensions} we study a natural extension of the model where we are given the connected $k$-sets of a graph for some fixed $k\geq 4$, rather than the connected triples. We show the largest value of $k$ such that each $n$-vertex tree has a different collection of connected $k$-sets is $\lceil n/2\rceil$. A similar threshold is shown for the random graph. Finally, we show that graphs without cycles of length $3,\dots,k$ on at least $2k-1$ vertices have different collections of connected $k$-sets.

\subsection{Related work}\label{sec:graph-related-work}
The problem of graph reconstruction arises naturally in many cases where some unknown graph is observed indirectly.
For instance, we may have some (noisy) measurement of the graph structure, or only have access to an oracle that answers specific types of queries.
Much previous research has been done for specific cases, such as reconstructing metric graphs from a density function~\cite{dey2018morse}, road networks from a set of trajectories~\cite{ahmed2012trajectories}, graphs using a shortest path or distance oracle~\cite{kannan2018reconstruction}, labelled graphs from all \(r\)-neighbourhoods~\cite{mossel2017shotgun}, or reconstructing phylogenetic trees~\cite{brandes2009phylogenetic}. 
A lot of research has been devoted to the \emph{graph reconstruction conjecture} \cite{Kelly42,Ulam60}, which states that it is possible to reconstruct any graph on at least three vertices (up to isomorphism) from the multiset of all (unlabeled) subgraphs obtained through the removal of one vertex. This conjecture is open even for planar graphs and triangle-free graphs, but has been proved for outerplanar graphs~\cite{Giles74} and maximal planar graphs~\cite{Lauri81}. We refer the reader to one of the many surveys (e.g.~\cite{bondy1977reconstruction,harary1974conjecture,tutte,LS16}) for further background.
Related to our study of the random graph in Section \ref{sec:extensions} is a result from Cameron and Martins \cite{cameron_martins_1993} from 1993, which implies that for each graph $H$, with high probability the random graph $G\sim G(n,\frac12)$ can be reconstructed from the set of (labelled) subsets that induce a copy of $H$ (up to complementation if $H$ is self-complementary).

Many types of uncertainty in graphs have been studied.
Fuzzy graphs~\cite{rosenfeld1975fuzzy} are a generalisation of fuzzy sets to relations between elements of such sets.
In a fuzzy set, membership of an element is not binary, but a value between zero and one.
Fuzzy graphs extend this notion to the edges, which now also have a degree of membership in the set of edges.
Uncertain graphs are similar to fuzzy graphs in that each edge has a number between zero and one associated with it, although here this number is a probability of the edge existing.
Much work has been done on investigating how the usual graph-theoretic concepts can be generalised or extended to fuzzy and uncertain graphs~\cite{kassiano2017mining,mordeson1994operations}.
% Methods for drawing these types of graphs have also been developed, see e.g.~\cite{schulz2017uncertaindrawing,shiono2012fuzzydrawing}.

\section{Preliminaries}\label{sec:prelim}
All graphs in this paper are assumed to be connected, finite, and simple. Let \(G\) be an unknown graph with \(n\) vertices and let \(T\) be the set of all triples of vertices that induce a connected subgraph in \(G\). Since the graph is connected, we can recover the vertex set $V$ of $G$ easily from $T$.
We will use \(\nT\) to denote the complement of this set \(T\), i.e.
\(\nT\) is the set of all triples of vertices for which the induced subgraph is not connected.
Note that \(\norm{T \cup \nT} = \binom{n}{3} \in \Theta(n^3)\).

Observe that both the presence and absence of a triple gives important information:
in the former case, at most one of the three possible edges is absent, whereas in the latter case, at most one of these edges is present.

\begin{figure}[t]
    \centering
    \includegraphics{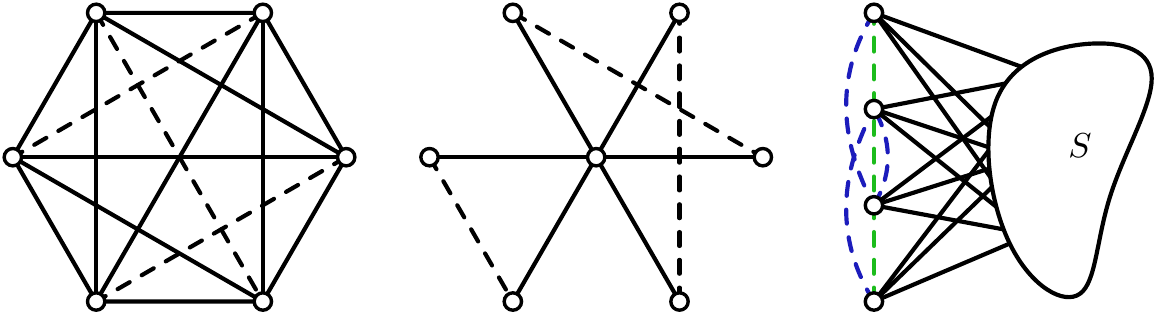}
    \caption{Three classes of ambiguous triples: a complete graph minus any independent set of edges, a star graph plus any (partial) matching of the leaves, and a path of length four in which all vertices are fully adjacent to some set \(S\).
    In this last case, we cannot tell the difference between the blue and green path.}
    \label{fig:ambiguous-classes}
\end{figure}

It is possible that graphs that are not the same (as labelled graph) or even not isomorphic yield the same set of triples, for example, a path on three vertices and a triangle. We also give examples of larger graphs that cannot be distinguished from their set of connected triples in Figure~\ref{fig:ambiguous-classes}.

We will make use of (LSD) radix string sorting (as described in e.g. \cite{cormen2022introduction}) to sort a collection of $t$ sets of cardinality $k$ in time $\mathcal{O}(tk)$ in several of the algorithms presented in this paper.

\section{Algorithm for finding consistent graphs from triples}
\label{sec:algorithm}

Given a set of triples \(T\), we can find a graph \(G\) consistent with those triples by solving a 2-SAT formula.
The main observation here is that the presence of a triple \(abc\) means that at least two of the edges \(ab\), \(ac\) and \(bc\) must exist, whereas the absence of a triple means at most one of the edges can exist.
We can then construct a 2-SAT formula where each variable corresponds to an edge of the graph, and truth represents presence of that edge.
For each triple \(abc \in T\), we add clauses \((ab \vee ac)\), \((ab \vee bc)\) and \((ac \vee bc)\) to the formula.
For each triple \(abc \in \nT\), we add clauses \((\neg ab \vee \neg ac)\), \((\neg ab \vee \neg bc)\) and \((\neg ac \vee \neg bc)\).
A graph consistent with the set of triples can then be found by solving the resulting 2-SAT formula and taking our set of edges to be the set of true variables in the satisfying assignment.
If the formula cannot be satisfied, no graph consistent with \(T\) exists.

We can solve the 2-SAT formula in linear time with respect to the length of the formula~\cite{aspvall1979sat,even1976sat}.
We add a constant number of clauses for each element of \(T\) and \(\nT\), so our formula has length \(O(\norm{T \cup \nT})\).
As \(\norm{T \cup \nT} = \binom{n}{3}\), this gives us an \(O(n^3)\) time algorithm to reconstruct a graph with \(n\) vertices.
However, we prefer an algorithm that depends on the size of \(T\), instead of also on the size of \(\nT\).
We can eliminate the dependency on the size of \(\nT\) by observing that some clauses can be excluded from the formula because the variables cannot be true.
\begin{lemma}\label{lem:algorithm}
    We can find a graph \(G\) consistent with \(T\) in \(O(n \cdot \norm{T})\) time, or output that no consistent graph exists.
\end{lemma}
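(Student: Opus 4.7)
\begin{proofsketch}
The plan is to trim the 2-SAT formula from size $\Theta(n^3)$ down to $O(n\cdot|T|)$ by discarding clauses whose variables are forced to be false. The crucial observation is that a pair $\{u,v\}$ can be an edge of any connected graph on $V$ consistent with $T$ only if there is some third vertex $w$ with $uvw\in T$: if $uv$ is an edge of a connected graph $G$ on $|V|\geq 3$ vertices, then some neighbour $w$ of $u$ or $v$ in $G$ exists, and this $w$ witnesses $uvw\in T$. I will call the pairs satisfying this necessary condition \emph{candidate edges}, and note that since each triple of $T$ contributes at most three such pairs, there are at most $3|T|$ candidate edges in total.

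Using this, I would first compute the set $E_c$ of candidate edges together with, for each $uv\in E_c$, the sorted list $T_{uv}:=\{w:uvw\in T\}$. This can be done in $O(|T|)$ time by producing, for every triple $uvw\in T$, the three records $(uv,w)$, $(uw,v)$, $(vw,u)$ and LSD-radix-sorting them lexicographically. I would then build a 2-SAT formula on the variables of $E_c$ only: every non-candidate edge is fixed to false, the three positive clauses coming from each triple of $T$ are added as before (contributing $O(|T|)$ clauses in total, since their variables are automatically in $E_c$), and for each $uv\in E_c$ I walk through $V$ and $T_{uv}$ in parallel to enumerate those $w$ for which $uvw\in\nT$, adding the clauses $(\neg uv\vee\neg uw)$ and $(\neg uv\vee\neg vw)$ whenever the second variable is also in $E_c$.

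This produces $O(|T|+n\cdot|E_c|)=O(n\cdot|T|)$ clauses, and the resulting formula is solved in time linear in its length \cite{aspvall1979sat,even1976sat}. The main point requiring justification, which I view as the key step, is that this pruning preserves satisfiability with respect to the full formula described just before the lemma: negative clauses involving a non-candidate edge are vacuously true once that variable is forced to false, and positive clauses never involve a non-candidate edge in the first place. Hence the reduced formula is equisatisfiable with the original, and from a satisfying assignment we read off an edge set giving a graph consistent with $T$; if the formula is unsatisfiable we report that no consistent graph exists.
\end{proofsketch}
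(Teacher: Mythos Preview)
Your proposal is correct and essentially identical to the paper's proof: their matrix $M(i,j)$ is exactly your family of lists $T_{uv}$ indexed by candidate edges, and both arguments prune the negative clauses by restricting to pairs that occur in at least one triple. The only detail you leave implicit is how to test ``$uw\in E_c$'' in constant time during the enumeration; the paper handles this via the $n\times n$ matrix at an additional $O(n^2)\subseteq O(n|T|)$ cost, and you would need the same or a comparable device.
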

\begin{proof}
    The basic observation that allows us to exclude certain clauses from the formula is that if there is no connected triple containing two vertices \(a\) and \(b\), the variable \(ab\) will always be false. 
    Consequently, if we have a triple \(abc \in \nT\) for which at most one of the pairs \(ab\), \(ac\) and \(bc\) appear in some connected triple, we do not need to include its clauses in the formula, as at least two of the variables will be false, making these clauses necessarily satisfied.

    We can construct the formula that excludes these unnecessary clauses in \(O(n \cdot \norm{T})\) time as follows.
    We build a matrix \(M(i, j)\), where $i, j \in V(G)$ with each entry containing a list of all vertices with which \(i\) and \(j\) appear in a connected triple, i.e.
    \(M(i, j) = \{x~|~ijx \in T\}\).
    This matrix can be constructed in \(O(n^2 + \norm{T})\) time, by first setting every entry to $\emptyset$ (this takes $n^2$ time) and then running through $T$, adding every triple $abc$ to the entries $M(a,b)$, $M(b,c)$ and $M(a,c)$. 
    % IM: Also note that upperbounding the length of the lists by $n$ is quite rough, since most lists will be much smaller if $|T|$ is small. That is why the proof talks about the total length of all the lists.
    We also sort each list in linear time using e.g.
    radix sort.
    As the total length of all lists is \(O(\norm{T})\), this takes \(O(n^2 + \norm{T})\) time in total. 

    Using this matrix, we can decide which clauses corresponding to triples from \(\nT\) to include as follows.
    For all pairs of vertices \((a,b)\) that appear in some connected triple (i.e.
    \(M(a, b) \neq \emptyset\)), we find all \(x\) such that \(abx \in \nT\).
    As \(M(a, b)\) is sorted, we can find all \(x\) in \(O(n)\) time by simply recording the missing elements of the list $M(a,b)$. 
    We then check if \(M(a, x)\) and \(M(b, x)\) are empty.
    If either one is not, we include the clause associated with \(abx \in \nT\) in our formula.
    Otherwise, we can safely ignore this clause, as it is necessarily satisfied by the variables for \(ax\) and \(bx\) being false.
    
    Our algorithm takes \(O(n)\)-time for each non-empty element of \(M(i, j)\), of which there are \(O(\norm{T})\), plus \(O(n^2)\) time to traverse the matrix.
    The total time to construct the formula is \(O(n^2 + n \cdot \norm{T})\).
    As \(\norm{T} \in \Omega(n)\) for connected graphs,
    this simplifies to \(O(n \cdot \norm{T})\) time.
    The resulting formula also has \(O(n \cdot \norm{T})\) length, and can be solved in time linear in that length.
    %The statement in the lemma follows.
\qed \end{proof}
Observe that this is only an improvement on the naive \(O(n^3)\) approach if \(\norm{T} \in o(n^2)\).
We also note that we can test the uniqueness of the reconstruction in the same time using Feder's approach for enumerating 2-SAT solutions~\cite{feder1994sat}.

\section{Unique reconstruction of trees}\label{sec:trees}
In this section, we prove the following result.
\begin{theorem}\label{thm:tree-reconstruction}
    Let \(T\) be a set of triples, and let it be known that the underlying graph \(G = (V, E)\) is a tree.
    If \(n \geq 5\), then \(G\) can be uniquely reconstructed in \(O(|T|)\) time.
\end{theorem}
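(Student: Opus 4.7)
My plan is to identify each edge of $G$ by counting, for every pair of vertices $u,v$, the number of triples in $T$ that contain both of them. Write $d(v) := d_G(v)$ and $c(u,v) := |\{w \in V : \{u,v,w\} \in T\}|$. Since $G$ is a tree, every connected triple is a $P_3$ with a unique middle, and a short case analysis gives
\[
c(u,v) =
\begin{cases}
d(u)+d(v)-2 & \text{if } \{u,v\} \in E(G), \\
1 & \text{if } \{u,v\} \notin E(G) \text{ and } d_G(u,v)=2, \\
0 & \text{otherwise.}
\end{cases}
\]
In particular, $c(u,v) \geq 2$ always forces $\{u,v\} \in E(G)$. Let $H$ be the graph on $V$ whose edges are the pairs with $c \geq 2$. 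Then $H \subseteq G$, and the edges missing from $H$ are exactly those $\{u,v\} \in E(G)$ with $d(u)+d(v) \leq 3$, which for $n \geq 3$ means one endpoint is a leaf and the other has degree $2$.

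To recover the missing edges I examine vertices isolated in $H$. A similar case check shows that for $n \geq 4$ a vertex $\ell$ satisfies $d_H(\ell) = 0$ if and only if $\ell$ is a leaf whose unique neighbour $v$ has degree $2$; in that case $|T_\ell| = d(v)-1 = 1$, so $\ell$ lies in a single triple $\{\ell,v,u\} \in T$, where $u$ is the second neighbour of $v$. The key claim is that among the three vertices of this triple only $v$ has $d_H$-degree~$1$: indeed $d_H(\ell)=0$, $d_H(v)=1$ since $v$ has degree $2$ with exactly one leaf neighbour, and $d_H(u) \geq 2$ because otherwise $u$ would be a degree-$2$ vertex with a leaf neighbour distinct from $v$, forcing the tree to be the $4$-vertex path on $\ell,v,u$ and that extra leaf, contradicting $n \geq 5$. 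Hence $v$ is singled out unambiguously, and we add $\{\ell,v\}$ to $H$. Performing this step for each isolated vertex yields $G$.

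The main obstacle is precisely this last tiebreak: it is the sole place where the hypothesis $n \geq 5$ is used, and it is exactly where small cases such as $P_4$ become genuinely ambiguous (compare the two labelled trees in \cref{fig:label-ambiguity}). Everything else is bookkeeping. For the running time, radix-sorting the $3|T|$ ordered pairs extracted from the triples produces in $O(|T|)$ time the counts $c(u,v)$, the graph $H$, the degree sequence $d_H$, and for every $\ell$ the list $T_\ell$ of triples containing $\ell$. Each isolated vertex is then processed in $O(1)$ time by looking up its unique incident triple and selecting the vertex of $d_H$-degree $1$, so the total running time is $O(|T|)$.
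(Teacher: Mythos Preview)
Your proof is correct and takes a genuinely different route from the paper's. The paper proceeds by an iterative reduction: it gives local characterisations (via domination and via the existence of two vertices $w_1,w_2$ covering all triples through $v$) of leaves and of degree-$2$ vertices or degree-$3$ vertices with a leaf neighbour, peels off leaves, contracts maximal degree-$2$ paths, and recurses until five vertices remain, which are then handled by a direct case analysis. Your argument, by contrast, is a single-pass edge-recovery scheme based on the exact formula $c(u,v)=d(u)+d(v)-2$ for tree edges, which immediately yields all edges except those joining a leaf to a degree-$2$ vertex; you then recover each such edge from the unique triple through the corresponding $H$-isolated leaf by picking the vertex of $H$-degree exactly $1$. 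The tiebreak analysis (ruling out $d_H(u)=1$ for $n\geq 5$) is precisely the place where the $P_4$ obstruction from \cref{fig:label-ambiguity} shows up, just as in the paper. Your approach is shorter and avoids recursion; the paper's structural characterisation, on the other hand, is reusable (variants of it reappear for outerplanar graphs in \cref{subsec:outerplanar}). Both achieve $O(|T|)$ time, and your running-time analysis via radix sort on the $3|T|$ ordered pairs is sound since $n=O(|T|)$.
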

Let us briefly examine trees with three or four vertices.
A tree with three vertices is always a path and it will always have one triple with all three vertices.
We do not know which of the three edges is absent.
A tree with four vertices is either a path or a star.
The path has two triples and the star has three triples.
For the star, the centre is the one vertex that appears in all three triples, and hence the reconstruction is unique.
For the path, we will know that the graph is a path, but we will not know in what order the middle two vertices appear (see Figure \ref{fig:label-ambiguity}).

\begin{figure}[t]
    \centering
    \includegraphics{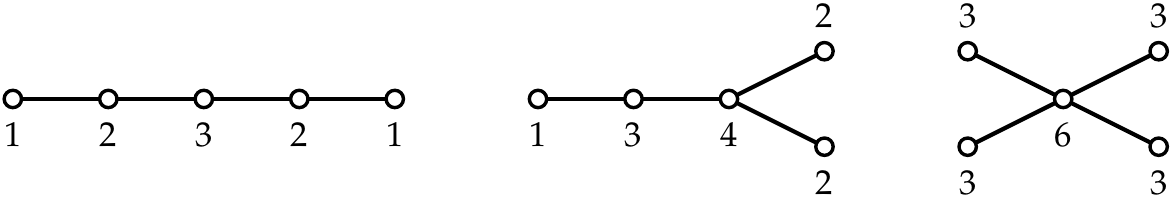}
    \caption{All trees on five vertices, and the number of triples each vertex occurs in.}
    \label{fig:5-vertex-trees}
\end{figure}

Next we consider trees with at least five vertices.
We first show that we can recognise all leaves and their neighbours from the triples.
In the following, we say that a vertex \(v\) \emph{dominates} a vertex \(u\) if \(v\) appears in all the triples that \(u\) appears in.
If \(u\) is a leaf, then it is dominated by its unique neighbour \(v\). It is possible that \(u\) is also dominated by a neighbour of \(v\), but in this case \(uvw\) will be the only triple containing \(u\), and \(v\) will be dominated by \(w\). This can be used to recognise the leaves.
We can use this to prove that any tree can be reconstructed from its triples, provided that we know that the result must be a tree and \(|V|\geq 5\), since we can iteratively recognise and remove vertices of degree 1, while recording where to `glue them back at the end' until at most four vertices remain. We can complete the reconstruction via some closer examination of the connected triples in the original tree that contain the remaining vertices. 

In order to derive an optimal, \(O(|T|)\) time reconstruction algorithm, we will use a further characterisation of vertices of a tree using the triples.
The main idea is that we can recognise not only leaves, but also other vertices where we can reduce the tree.
If a vertex \(v\) has degree \(2\) in a tree, then there are two nodes \(w,w'\) such that every triple with \(v\) also contains \(w\) or \(w'\) (or both).
The converse is not true for two reasons: if \(v\) is a leaf, it also has the stated property, and if \(v\) has degree \(3\) where at least one neighbour is a leaf, then it has this property as well. This brings us to the following characterisation.
\begin{lemma}\label{lem:tree-alg}
    A vertex \(v\) of a tree \(G\) of at least five vertices with triple set \(T\) is:
    \begin{description}
        \item[(i)] a leaf if and only if \(v\) is dominated by some vertex \(w\) and does not dominate any vertex itself;
        \item[(ii)] if \(v\) is not a leaf, then \(v\) is (a) a node of degree \(2\), or (b) a node of degree \(3\) with at least one leaf neighbour, if and only if there are two nodes \(w_1,w_2\) such that all triples with \(v\) also contain \(w_1\) or \(w_2\).
    \end{description}
    Moreover, both characterisations can be checked in time \(O(|T_v|)\), when the set of triples \(T_v\) that include \(v\) is given for all $v\in V(G)$.
\end{lemma}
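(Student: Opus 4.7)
The plan is to rely on a basic structural fact: a triple $\{a,b,c\}$ is connected in a tree if and only if it induces a length-two path, i.e., one of the three vertices (call it the middle) is adjacent to the other two. In particular, every connected triple containing $v$ either has $v$ as its middle (so the other two vertices are neighbours of $v$) or has some neighbour of $v$ as its middle.

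For part (i), the forward direction is immediate: if $v$ is a leaf with unique neighbour $w$, then $v$ is never the middle of a length-two path, so every triple containing $v$ has $w$ as middle and hence contains $w$, showing $w$ dominates $v$. For the claim that $v$ dominates no $u \neq v$, I would produce a connected triple through $u$ that avoids $v$. When $u \neq w$, either $u$ has two neighbours in $G$ (none of which can be $v$, since $v$'s only neighbour is $w$) giving a triple with $u$ as middle, or $u$ is a leaf whose unique neighbour $x$ has some further neighbour $y \neq u$, and $\{u,x,y\}$ works. When $u = w$, the hypothesis $n \geq 5$ guarantees either another neighbour of $w$ of degree at least two or $w$ of degree at least three, which in either case produces a triple through $w$ avoiding $v$. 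The backward direction is the key obstacle: assume $v$ is dominated by some $w$ and $v$ is not a leaf. Since $v$ has two distinct neighbours $u_1,u_2$, the connected triple $\{v,u_1,u_2\}$ must contain $w$, so $w$ is a neighbour of $v$. For any other neighbour $u_i$ of $v$, if $u_i$ had a neighbour $x \neq v$, then $\{v,u_i,x\}$ would be a connected triple avoiding $w$, contradicting domination; hence every neighbour of $v$ other than $w$ is a leaf. But then any such leaf $\ell$ has $v$ as its only neighbour, so every triple containing $\ell$ also contains $v$, forcing $v$ to dominate $\ell$ and contradicting the second hypothesis.

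For part (ii), the forward direction is again a direct structural check. If $v$ has degree $2$ with neighbours $w_1,w_2$, then every triple $\{v,x,y\}$ has one of $v,w_1,w_2$ as middle; if the middle is $w_1$ or $w_2$ it lies in the triple, and if $v$ is the middle then $\{x,y\}=\{w_1,w_2\}$. If $v$ has degree $3$ with at least one leaf neighbour $u$, take $w_1,w_2$ to be the other two neighbours; since $u$ is a leaf it cannot be the middle of any length-two path, so the same analysis applies. For the backward direction, write $u_1,\ldots,u_k$ for the neighbours of $v$. The triples $\{v,u_i,u_j\}$ are all connected; if $k \geq 4$, two disjoint index pairs give two disjoint triples, and no two-element set $\{w_1,w_2\}$ can cover both. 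If $k = 3$, a pigeonhole argument over the three triples $\{v,u_i,u_j\}$ forces $\{w_1,w_2\}=\{u_i,u_j\}$ for some $i \neq j$; then any neighbour $y \neq v$ of the remaining $u_m$ would yield a connected triple $\{v,u_m,y\}$ avoiding $\{w_1,w_2\}$, so $u_m$ must be a leaf, placing $v$ in case (b).

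For the time bound, in a single pass over $T_v$ I would maintain an occurrence count $c(u)$ for every vertex $u \neq v$ appearing in some triple of $T_v$, using a direct-address array with a list of touched entries to avoid paying $\Theta(n)$ for initialisation; the sizes $|T_u|$ are precomputed once globally in $O(|T|)$ time. For (i), $w$ dominates $v$ iff $c(w)=|T_v|$, and $v$ dominates $u$ iff $c(u)=|T_u|$, so both checks finish within the initial scan. For (ii), I would pick any triple $\{v,a,b\} \in T_v$; at least one of $a,b$ must lie in $\{w_1,w_2\}$, so for each candidate $c \in \{a,b\}$ I filter $T_v$ to the triples not containing $c$ and recount to see whether any single vertex appears in all of them. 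Each step is a constant number of linear scans of $T_v$, keeping the total at $O(|T_v|)$.
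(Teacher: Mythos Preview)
Your argument is correct and matches the paper's approach closely: both prove (i) and (ii) by direct structural analysis of length-two paths in a tree, and both test the conditions in $O(|T_v|)$ time by fixing a single triple $\{v,a,b\}\in T_v$ and checking the constant number of resulting candidates for $w$, respectively $(w_1,w_2)$. One small slip to repair: in part (ii) for $\deg(v)\geq 4$, two disjoint neighbour-pairs alone do \emph{not} exclude a covering pair $\{w_1,w_2\}$ (for instance $w_1=u_1$, $w_2=u_3$ meets both $\{u_1,u_2\}$ and $\{u_3,u_4\}$); you need one further pair such as $\{u_2,u_4\}$, or equivalently the remark that $K_4$ has no vertex cover of size two.
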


\begin{proof}
    A leaf \(v\) can necessarily only appear in triples with its adjacent vertex \(w\), as it is not adjacent to any other vertices by definition.
    A leaf is therefore always dominated by its neighbour \(w\).     Since \(|V|\geq 5\), \(v\) does not dominate any vertex.
    Conversely, suppose that \(v\) is dominated by some vertex \(w\) and does not dominate any other vertex. It is straightforward to check that \(v\) cannot be dominated if it has three neighbours or if it has two non-leaf neighbours. Since \(v\) does not dominate any vertices, it does not have a leaf neighbour. So \(v\) must be a leaf itself. This proves (i).
    
The second characterisation can be seen as follows.
If \(v\) has degree at least \(4\), then no \(w_1,w_2\) as in (ii) exist, which is easily verified by looking at the triples with \(v\) and its neighbours only.
Furthermore, if \(v\) has degree \(3\) and none of its neighbours are leaves, then again there are no such \(w_1,w_2\).
On the other hand, in case (a) the two neighbours can be taken as \(w_1,w_2\) and in case (b) \(w_1,w_2\) can be chosen to be two neighbours of $v$ so that the unique neighbour of $v$ that is not in $\{w_1, w_2\}$ is a leaf.

For testing (i), take any triple \(vab\in T_v\), and test both \(a\) and \(b\) separately if they are the sought \(w\). For testing (ii), take any triple \(vab\in T_v\).
If characterisation (ii) holds, then \(w_1\) must be \(a\) or \(b\).
We try both as follows: For $w \in \{a, b\}$ we remove all triples with \(w\) from \(T_v\).
Then $w = w_1$ if and only if all of the remaining triples of $T_v$ all contain some $w_2 \neq v$. We test this by looking at some remaining triple in $vcd \in T_v$ and testing whether either $c$ or $d$ is contained in every other remaining triple.
In total, we get four options to test for \(w_1\) and \(w_2\); each option is easily checked in \(O(|T_v|)\) time.
\qed \end{proof}

Note that more than half of the vertices of \(G\) satisfy one of the two characterisations of the lemma.
We next turn to the proof of Theorem \ref{thm:tree-reconstruction}. We will assume that there is a total order on the vertex set of $G$ and that the connected triples \(uvw\) are stored in an ordered tuple with \(u< v< w\). 
For each triple \(uvw\) in \(T\), we generate \(vwu\) and \(wuv\) as well.
We collect the triples with the same first vertex to generate \(T_v\) for all \(v\in V\). 
Then, for all \(v\in V\), we use \(T_v\) to test if \(v\) satisfies one of the conditions of Lemma~\ref{lem:tree-alg}.
The vertices of \(V\) partition into \(V'\), \(V''\), and \(V'''\), where \(V'\) contains the leaves, \(V''\) contains the vertices that are not leaves but satisfy the second condition of the lemma, and \(V'''=V\setminus (V'\cup V'')\).

For all leaves \(v\), we record the set of vertices that dominate $v$ in $O(|T_v|)$. This is a single vertex $w$ that is the neighbour of $v$ unless $T_v$ contains exactly one triple $vw'w$. In this case, $w$ dominates $w'$ or vice versa; the dominated one of the two is the neighbour of $v$. Hence, we can record all the leaves and their neighbours.
After this, we remove all triples containing a leaf from $T$. Let $G'$ be the graph obtained by removing all leaves and incident edges. Then the new triple set is the set of connected triples for this graph $G'$, and we can recover $G$ from $G'$.

For all vertices in \(V''\), note that they can no longer be vertices of degree \(3\) in $G'$, but they may have become leaves.
We test this and consider the subset \(W\subseteq V''\) of vertices that have not become leaves. 

The subgraph of $G'$ induced on $W$ consists of a disjoint union of paths. 
Let $v \in W$. Then $v$ has exactly two neighbours in $V(G')$ and they are the two vertices $w_1, w_2$ satisfying the second condition of the Lemma \ref{lem:tree-alg}. We can find these two vertices $w_1,w_2$ for each $v\in W$ in time $O(|T_v|)$.
In particular, we know all path components of $G'[W]$, as well as the unique vertices in $V(G')\setminus W$ that the endpoints of any such path are adjacent to.
Suppose that \(v_1\)---\(v_2\)---\(\dots\)---\(v_\ell\). is one of the path components of $G[W]$. %(with  \(v_i\) and \(v_{i+1}\) are adjacent for \(1\leq i\leq \ell-1\)).
Let \(x_1,x_2\in V(G')\setminus  W'\) such that \(x_1\) is the other neighbour of \(v_1\), and \(x_2\) is the other neighbour of \(v_k\) (in $G'$). We record the edges $x_1v_1$ and $x_2v_k$, as well as the edges and vertices in the path \(v_1,\dots,v_\ell\).
Then we replace each triple \(ux_1v_1\) by \(ux_1x_2\) and each triple \(v_kx_2u\) by \(x_1x_2u\). Afterwards, we discard all triples that contain any of \(v_1,\ldots,v_\ell\).
Let $G''$ be the graph obtained by deleting $v_1, \dots, v_\ell$ and adding the edge $\{x_1, x_2\}$. 
The resulting triple set is the triple set for $G''$, and we can recover $G$ from $G''$. We repeat this for all path components. We maintain throughout that the stored triple set corresponds to a tree, and that we can reconstruct the original tree $G$ from knowing this tree and the additional information that we record.

Finally, we also remove all leaves in \(V''\setminus W\) by discarding more triples, similar to the first leaf removal.
This process takes time linear in \(|T|\), and reduces the number of vertices occurring in \(T\) to half or less.
We repeat the process on the remaining tree until it has size five, at which point we can uniquely identify the structure of the tree by simply looking at the number of triples each vertex occurs in (see \cref{fig:5-vertex-trees}).
We may not remove all vertices of \(V'\) or \(V''\) if the remaining tree would be smaller than five vertices; in that case, we simply leave some of them in.
A standard recurrence shows that the total time used is \(O(|T|)\). This finishes the proof of \cref{thm:tree-reconstruction}.

We note that if the tree contains no leaves that are siblings, then we do not need to know that the graph is a tree for unique reconstruction. 
% CHECKING FOR MYSELF: We can recognise a vertex set $u_1,\dots,u_k$ contains a cycle (can maybe a lemma?) for $k\geq 4$ if $u_i,u_{i+1},u_{i+2}$ is connected for all $i$, indices taken modulo 4. So we can recognise that the graph has a cycle, if it has a cycle of length at least 4. 
% Suppose now that $u,v,w$ is a triangle. If we confuse the graph with a tree $T'$, then $u,v,w$ are a $P_3$ in $T'$, say $u-v-w$, and if $T'$ has no leaf-siblings, then say $w$ has another neighbour $x$. Now in $T$, $u-v-w-x$ is a path on 4 vertices, then $u,v,x$, $u,w,x$ are not connected whereas $v,w,x$ is. This shows $x$ is adjacent to $v$ or $w$ and not to $u$. The order of $v,w$ is unclear, but it must be a $P_4$, so any graph with a triangle on $u,v,w$ cannot have the same connected triples as $T'$. 

\section{Further reconstructible graph classes}\label{sec:unique-graphs}
In this section, we give larger classes of graphs for which the graphs that are determined by their set of connected triples. 
\begin{theorem}\label{thm:triangle-free}
There is an algorithm that reconstructs a graph \(G\) on $n\geq 5$ vertices that is known to be triangle-free from its set \(T\) of triples in deterministic $O(|T| \log(|T|))$ time or randomized $O(|T|)$ expected time.
\end{theorem}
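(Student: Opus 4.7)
The plan is based on the observation that in a triangle-free graph every connected triple $\{a,b,c\}$ induces a path of length two, containing exactly two edges of $G$ and one non-edge; the unique vertex incident to both of these edges -- call it the \emph{pivot} of the triple -- determines the two edges, so reconstructing $G$ is equivalent to identifying the pivot of every triple in $T$.

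The central local test is the following. For a pair $(u,v)$ occurring in some triple of $T$, the pair is a non-edge of $G$ iff every $w$ with $\{u,v,w\}\in T$ is a common neighbour of $u$ and $v$. Combined with triangle-freeness (any two common neighbours of $u,v$ are themselves non-adjacent), this yields: if $\{u,v,w_1\},\{u,v,w_2\}\in T$ are witnesses to the pair $(u,v)$ and $uv\notin E$, then both $\{u,w_1,w_2\}$ and $\{v,w_1,w_2\}$ also lie in $T$. Contrapositively, if we exhibit $w_1\neq w_2$ for which exactly one of $\{u,w_1,w_2\}$ and $\{v,w_1,w_2\}$ is in $T$, then $uv\in E$. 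A short case analysis in the ``same-side'' witnesses (pairs $w_1,w_2$ both in $N_G(u)\setminus\{v\}$ or both in $N_G(v)\setminus\{u\}$) shows the certificate fires precisely on these, and a constant fraction of random witness pairs is of this kind whenever $uv\in E$ with $\deg_G(u)+\deg_G(v)\geq 5$.

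The algorithm has three phases. First, preprocessing: radix-sort the triples in $T$, compute the count $c(u,v)$ and witness list of every pair appearing in a triple, and in the randomized version build a hash table supporting $O(1)$-expected membership queries on $T$; this costs $O(|T|)$. Second, iteratively peel off leaves using the domination criterion from Section 4, recording their incident edges. Third, for every surviving pair $(u,v)$ with $c(u,v)\geq 2$, sample a constant number of random witness pairs and run the certificate; whenever it fires, record $uv\in E$. A final pass verifies the candidate edge set against $T$ in $O(|T|)$ by checking that every triple is a cherry in the candidate graph.

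The main obstacle is the degenerate case where $uv\in E$ with $\deg_G(u)=\deg_G(v)=2$: the pair has exactly two witnesses forming a single ``diagonal'' pair on which the certificate is inconclusive, and locally this is indistinguishable from a non-edge $uv$ with exactly two common neighbours. I would resolve such pairs in a clean-up pass by propagating from the pivots already determined at $u$'s and $v$'s other incidences, which fix the pivot of each triple $\{u,v,w_i\}$ by consistency; on a pure cycle $C_n$, where no propagation is available, the non-edge pair inside every triple has strictly smaller $c$-value than the two edge pairs, so the pivot can be read off from counts. The deterministic $O(|T|\log|T|)$ bound replaces hashing and random sampling by balanced search trees and systematic enumeration of witness pairs.
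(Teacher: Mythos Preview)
Your central observation---that if $uv\notin E$ then for any two witnesses $w_1,w_2$ both $\{u,w_1,w_2\}$ and $\{v,w_1,w_2\}$ lie in $T$---is exactly the paper's key lemma. But the certificate you extract, ``\emph{exactly} one of the two is in $T$'', is weaker than the contrapositive actually gives: \emph{at least} one missing already forces $uv\in E$. You discard the case ``neither in $T$'' (a diagonal witness pair with $w_1w_2\notin E$), which is also a valid certificate for $uv\in E$. With the stronger test, the only inconclusive outcome for a single witness pair $(c,d)$ is ``both in $T$'', and then $\{u,v,c,d\}$ necessarily induces a $C_4$.

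This weaker certificate is what breaks your running-time claims. With a \emph{constant} number of random witness pairs per potential edge the per-edge failure probability is a constant, so the probability that \emph{all} edges are classified correctly is exponentially small in $|E|$; your verification pass will almost surely fail, and restarting does not yield expected $O(|T|)$ time. Sampling until the certificate fires is no better: for a non-edge $uv$ it never fires, so you would exhaust all $\binom{c(u,v)}{2}$ witness pairs, and $\sum_{uv\notin E} c(u,v)^2$ can be $\Theta(n\,|T|)$ (take $G=K_{m,m}$). The same objection kills the deterministic ``systematic enumeration of witness pairs''.

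The paper sidesteps sampling entirely. For each pair $(a,b)$ with at least two witnesses it inspects \emph{one} witness pair $(c,d)$: if not both of $\{a,c,d\}$ and $\{b,c,d\}$ lie in $T$ then $ab\in E$; otherwise $\{a,b,c,d\}$ is a $C_4$, and its labelling is then fixed by looking at the triples involving any fifth vertex $e$ adjacent to the $C_4$ (such an $e$ exists since $n\geq 5$, and triangle-freeness forces enough absent triples among the $\{x,y,e\}$ to pin down the cycle order). Pairs with a single witness are reduced to the two-witness case by switching to another pair of the same triple. This is $O(1)$ triple lookups per pair, giving the stated bounds; your leaf-peeling, propagation, and $C_n$ special cases are all subsumed by this one ``$C_4$ plus a fifth vertex'' argument, which is the missing idea.
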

\begin{proof}
    Let $T$ be the given list of connected triples. For every triple $abc$ we create three ordered copies $abc$, $acb$, $bca$. We then sort the list in lexicographical order in $O(|T|)$ time using radix sort. For every potential edge $ab$ that appears as the first two vertices of some triple we test whether it is an edge as follows.

    If we find two triples $abc$ and $abd$, we search for the triples $acd$ and $bcd$. If $ab \notin E(G)$, then we must have that $bc, ac, bd, ad \in E(G)$, and thus both $acd$ and $bcd$ are connected. Therefore, if either triple is not in the list, we know that $ab$ is an edge. Otherwise, we find that $abcd$ is a $C_4$. We then search for another vertex $e$ that appears in a triple with any of $a, b, c, d$ and reconstruct the labeling of the $C_4$ as follows. Suppose $G[\{a,b,c,d\}]$ induces a $C_4$ (with also this order, but this we don't know immediately). Assume w.l.o.g. that $a$ has the remaining vertex $e$ as a neighbour, then since $G$ is triangle-free, $e$ is not adjacent to $b$ and $d$. This means that $bde$ is known to be disconnected, whereas $abe,ade$ are connected. If $e$ is not a neighbour of $c$, then $ace$ is not connected. Hence, if any of $a,b,c,d$ has a private neighbour (one not adjacent to other vertices in the cycle), then we get the labeling of our $C_4$ (and find that $e$ is a private neighbour of $a$). If $e$ is adjacent to $c$ besides $a$, then we know the following two vertex sets also induce $C_4$'s: $abce,acde$. We know $e$ is adjacent to two out of $\{a,b,c\},\{a,d,c\}$ but not to $b$ and $d$. So we find $e$ is adjacent to $a$ and $c$ and also have found our labeling.

    If we only find one triple $abc$, we check for triples starting in $ac$ or $bc$. Since one of the three vertices involved must be adjacent to some other vertex $d$, one of these two potential edges must appear in at least two triples. We can then use the previous methods to reconstruct some subgraph containing $a, b$ and $c$. The result will tell us whether $ab$ is an edge or not.

    Note that we can search our list for a specific triple or a triple starting with a specific pair of vertices, in time $O(\log(|T|))$ using binary search. This means that the above checks can be done in $O(\log(|T|))$ time. By handling potential edges in the order in which they appear in the list, we only need to run through the list once, and thus we obtain a runtime of $O(|T|\log(|T|))$.

    Using a data structure like the one described by Fredman et al. \cite{Fredman1982}, we can query the required triples in $O(1)$ time and thus reconstruct the graph in $O(|T|)$ time. Creating this data structure is a randomized procedure and takes $O(|T|)$ expected time.
\qed \end{proof}

Note, we can also reconstruct some special cases of graphs which contain triangles.
For example, we can reconstruct a triangle if two of its vertices have neighbors that are non-adjacent to any other vertex of the triangle.
However, we cannot distinguish a triangle where exactly one vertex has a private neighbor from the star $K_{1,3}$. 
We are far from a characterisation of when we can reconstruct graphs that are not known to be triangle free. 
We will the following two special cases in appendix.

\begin{reptheorem}{twoconnected}
    Any graph on $n \geq 6$ vertices that is known to 2-connected and outerplanar can be reconstructed from its list of connected triples. 
\end{reptheorem}
Note, a graph is \emph{$k$-connected} if it has more than $k$ vertices and the graph cannot be disconnected by removing fewer than $k$ vertices.
Our approach is similar to the one for trees: we show that we can identify a vertex of degree two, and remove it from the graph by `merging' it with one of its neighbours.

Our approach is similar to the one for trees: we show that we can identify a vertex of degree two, and remove it from the graph by `merging' it with one of its neighbours.

A \emph{triangulated planar graph}, also called a \emph{maximal planar graph}, is a planar graph where every face (including the outer face) is a triangle.
\begin{reptheorem}{triangulated}
    Let \(T\) be a set of triples, and let it be known that the underlying graph \(G = (V, E)\) is planar and triangulated.
    Then \(G\) can be uniquely reconstructed from \(T\) if \(n \geq 7\).
\end{reptheorem}

To show this result, we first show that unique reconstruction of such graphs is possible if they are 4-connected.
We then show the case where the graph is \emph{not} 4-connected reduces to the 4-connected case.

\section{Reconstruction from connected $k$-sets}
\label{sec:extensions}
For $k\geq 2$ and a graph $G=(V,E)$, we define the \emph{connected $k$-sets} of $G$ as the set $\{X\subseteq V \mid |X|=k \text{ and }G[X]\text{ is connected}\}$.
We will denote the set of neighbours of a vertex $v$ by $N(v)$.
\begin{observation}
\label{obs:monotonicity}
For $k'\geq k\geq 2$, the connected $k'$-sets of a graph are determined by the connected $k$-sets.
\end{observation}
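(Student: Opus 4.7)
The plan is to argue by induction on $k'$, with base case $k' = k$ immediate. For the inductive step, assume $k' > k$ and that the connected $(k'-1)$-sets are already determined by the connected $k$-sets. I will establish the following characterisation: for any $X \subseteq V$ with $|X| = k'$, the set $X$ is a connected $k'$-set of $G$ if and only if there exists $v \in X$ such that $X \setminus \{v\}$ is a connected $(k'-1)$-set and some connected $k$-subset $K \subseteq X$ satisfies $v \in K$. Both conditions are checkable from the data available at this stage, so this yields the claim.

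The backward direction is immediate: since $K$ is connected and has $k \geq 2$ vertices, $v$ is non-isolated in $G[K]$ and therefore has a neighbour in $K \setminus \{v\} \subseteq X \setminus \{v\}$; attaching $v$ to the connected $G[X \setminus \{v\}]$ keeps the graph connected.

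For the forward direction, assume $G[X]$ is connected and pick $v \in X$ to be a non-cut vertex of $G[X]$, which exists because every connected graph on at least two vertices contains such a vertex (e.g.\ a leaf of any spanning tree). Then $G[X \setminus \{v\}]$ is connected. Choose any neighbour $u$ of $v$ inside $X \setminus \{v\}$; this exists since $G[X]$ is connected with $|X| \geq 2$. Now $G[X \setminus \{v\}]$ is connected on $k' - 1 \geq k - 1$ vertices, so a BFS from $u$ inside this induced subgraph, truncated once exactly $k - 1$ vertices have been visited, yields a connected $(k-1)$-subset $S$ of $X \setminus \{v\}$ containing $u$. Then $S \cup \{v\}$ is a connected $k$-subset of $X$ containing $v$, as required.

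The only real obstacle is identifying the right characterisation; once that is in place both directions are short, and the hypothesis $k' \geq k$ is used exactly once, in the BFS step of the forward direction, to guarantee that $X \setminus \{v\}$ is large enough to contain a connected $(k-1)$-subset through $u$.
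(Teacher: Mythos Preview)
Your proof is correct, but it takes a somewhat more elaborate path than the paper's. The paper's characterisation for the inductive step is: a $(k'{+}1)$-set $X$ is connected if and only if there exist two distinct vertices $y,z\in X$ such that both $G[X\setminus\{y\}]$ and $G[X\setminus\{z\}]$ are connected. The forward direction uses two leaves of a spanning tree; the backward direction uses that $G[X\setminus\{y\}]$ and $G[X\setminus\{z\}]$ are two connected subgraphs whose vertex sets overlap (in $X\setminus\{y,z\}\neq\emptyset$, since $|X|\geq 3$) and together cover $X$.

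Your characterisation instead removes a single non-cut vertex $v$ and then separately certifies, via a connected $k$-subset through $v$, that $v$ has a neighbour in $X\setminus\{v\}$. This works, and the truncated-BFS argument is fine, but it mixes two levels of data (the $(k'{-}1)$-sets from the inductive hypothesis and the original $k$-sets), whereas the paper's criterion lives entirely at the $(k'{-}1)$-level. The paper's version is shorter and avoids the BFS construction; your version, on the other hand, makes explicit how the original $k$-sets witness the attachment of $v$, which is arguably more transparent about where the base data enters.
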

%\begin{proof} LC: I would put this in a proof environment, but I guess page limit is an issue?
Indeed, a $(k+1)$-set $X=\{x_1,\dots,x_{k+1}\}\subseteq V$ induces a connected subgraph of $G$ if and only if for some $y,z\in X$, both $G[X\setminus\{y\}]$ and $G[X \setminus\{z\}]$ are connected. 
%\end{proof}

Given a class $\mathcal{C}$ of graphs, we can consider the function $k(n)$, where for any integer $n \geq 1$, we define $k(n)$ to be the largest integer $k\geq 2$ such that all (labelled) $n$-vertex graphs in $\mathcal{C}$ have a different collection of connected $k$-sets. By Observation \ref{obs:monotonicity}, asking for the largest such $k$ is a sensible question: reconstruction becomes more difficult as $k$ increases. We will always assume that we only have to differentiate the graph from other graphs in the graph class, and remark that often the \emph{recognition problem} (is $G\in \mathcal{C}$?) cannot be solved even from the connected triples.

First, we give an analogue of Theorem \ref{thm:tree-reconstruction}. The proof is given in the appendix.
\begin{reptheorem}{treethreshold}
% \label{thm:tree-threshold} --- reptheorem automatically labels it treethreshold it seems
If it is known that the input graph is a tree, then the threshold for reconstructing trees is at $\lceil n/2\rceil $: we can reconstruct an $n$-vertex tree from the connected $k$-sets if $k\leq \lceil n/2\rceil $ and we cannot reconstruct the order of the vertices in an $n$-vertex path if $k\geq \lceil n/2\rceil +1$.
\end{reptheorem}
Using the theorem above, we give examples in the appendix (Proposition \ref{prop:infmany}) showing that for every value of $k\geq 2$, there are infinitely many graphs that are determined by their connected $k$-sets but not by their connected $(k+1)$-sets. 

We next show that a threshold near $n/2$ that we saw above for trees, holds for `almost every $n$-vertex graph'.
The Erd\H{o}s-Renyi random graph
$G\sim G(n,\frac12)$ has $n$ vertices and each edge is present with probability $\frac12$, independently of the other edges. This yields the uniform distribution over the collection of (labelled) graphs on $n$ vertices. 
If something holds for the random graph with high probability (that is, with a probability that tends to $1$ as $n\to \infty$), then it holds for `almost every graph' in some sense. The random graph is also interesting since it is extremal for many problems. More information can be found in e.g. \cite{bela1998chapter,frieze2015,janson2011random}.

We say an $n$-vertex graph $G=(V,E)$ is \emph{random-like} if the following three properties hold (with $\log$ of base $2$).
\begin{enumerate}
    \item For every vertex $v\in V$,
\[
n/2-3 \sqrt{n\log n}\leq |N(v)|\leq n/2+3 \sqrt{n\log n}.
\]
\item For every pair of distinct vertices $v,w\in V$,
\[
n/4- 3 \sqrt{n\log n}\leq |N(v)\cap (V\setminus N(w))|\leq  n/4+ 3 \sqrt{n\log n}.
\]
\item There are no disjoint subsets $A,B\subseteq V$ with $|A|,|B|\geq 2\log n$ such that there are no edges between a vertex in $A$ and a vertex in $B$.
\end{enumerate}
% \linda{Specify what base we are using for $\log$? Or is this obvious to everyone? I had to look at the proof to realize it should be base $e$.} 
\begin{replemma}{randomgraphproperties}
For $G\sim G(n,\frac12)$, with high probability $G$ is random-like.
\end{replemma}
The claimed properties of the random graph are well-known, but we added a proof in Appendix \ref{app2} for the convenience of the reader.
\begin{theorem}
\label{thm:reconstr_random}
For all sufficiently large $n$, any $n$-vertex graph $G$ that is random-like 
can be reconstructed from the set of connected $k$-sets for $2\leq k\leq \frac12 n-4 \sqrt{n \log n}$ in time $O(n^{k+1})$. On the other hand, $G[S]$ is connected for all subsets $S$ of size at least $\frac12 n +4 \sqrt{n \log n}$. 
\end{theorem}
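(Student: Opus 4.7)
The plan is to prove the two statements separately. For the claim that $G[S]$ is connected whenever $|S| \geq n/2 + 4\sqrt{n\log n}$, suppose for contradiction that $G[S]$ admits a partition $S = A \sqcup B$ with no edges between $A$ and $B$. If both $|A|, |B| \geq 2\log n$, property~3 is violated. Otherwise, assume without loss of generality $|A| < 2\log n$, so any vertex $a \in A$ has no neighbours in $B$, and hence
\[
|N(a)| \leq n - |B| = n - |S| + |A| < n/2 - 4\sqrt{n\log n} + 2\log n,
\]
which contradicts property~1 for sufficiently large $n$, since $\sqrt{n\log n} > 2\log n$ eventually.

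For the reconstruction part, the plan is to first use \cref{obs:monotonicity} to determine the connectivity of every $(k+1)$-set of $G$ from the given connected $k$-sets. I will then prove the following characterisation: for each pair $u \neq v$, $uv \notin E(G)$ if and only if there exists a $(k-1)$-subset $T \subseteq V \setminus \{u, v\}$ such that $G[T \cup \{u\}]$ is connected but $G[T \cup \{u, v\}]$ is disconnected. The backward direction is short: if $G[T \cup \{u\}]$ is connected but adding $v$ disconnects it, then $v$ must be isolated in $G[T \cup \{u, v\}]$, so $v$ has no neighbour in $T \cup \{u\}$, and in particular $uv \notin E$.

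The main obstacle is the forward direction, namely constructing such a witness $T$ when $uv \notin E$. Let $Y = V \setminus (N(v) \cup \{v\})$, so $u \in Y$, and by property~1, $|Y| \geq n/2 - 3\sqrt{n\log n} - 1$. Property~2 applied to $u, v$ gives $|N(u) \cap Y| = |N(u) \setminus N(v)| \geq n/4 - 3\sqrt{n\log n}$. Let $C$ be the connected component of $u$ in $G[Y]$; the bound above forces $|C| \geq 2\log n$ for large $n$. If $|C| < k$, then $|Y \setminus C| \geq \sqrt{n\log n} > 2\log n$ (using $k \leq n/2 - 4\sqrt{n\log n}$), so property~3 would produce an edge between $C$ and $Y \setminus C$, contradicting that $C$ is a connected component of $G[Y]$. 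Hence $|C| \geq k$, and I extract a connected $k$-set $S \subseteq C$ with $u \in S$ by running BFS from $u$. Setting $T = S \setminus \{u\}$, we have $T \cup \{u\} = S$ connected, while $T \cup \{u, v\} = S \cup \{v\}$ is disconnected because $v$ has no neighbour in $S \subseteq Y$.

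The runtime follows by iterating over all $O(n^2)$ pairs $(u, v)$ and all $O(n^{k-1})$ subsets $T$, doing a constant-time check for each; the required $(k+1)$-set connectivity table is populated beforehand in $O(n^{k+1})$ total time via \cref{obs:monotonicity}, giving the claimed $O(n^{k+1})$ bound.
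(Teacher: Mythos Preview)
Your argument is correct. Both halves work, and the runtime analysis is sound (once one notes that $\binom{n}{k-1}\le n^{k-1}/(k-1)!$ absorbs the $O(k^2)$ cost of each $(k+1)$-set connectivity check). A minor quibble: the bound $|Y\setminus C|\ge \sqrt{n\log n}$ should read $|Y\setminus C|> \sqrt{n\log n}-1$, but this still exceeds $2\log n$ for large $n$, so nothing breaks.

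Your route differs from the paper's in both parts. For the ``large sets are connected'' claim, the paper fixes $u,v\in S$, shows via property~1 that each has at least $2\log n$ neighbours inside $S$, and then invokes property~3 on $N(u)\cap S$ and $N(v)\cap S$ to produce a $u$--$v$ path; you instead argue by contradiction on a bipartition of $S$. For the reconstruction, the paper works vertex-by-vertex: it proves that $G[V\setminus N[u]]$ is \emph{fully} connected (using property~2 on arbitrary $v,w\in V\setminus N[u]$ and then property~3), identifies $V\setminus N[u]$ as the unique maximal set $S$ with $G[S]$ connected and $G[S\cup\{u\}]$ disconnected, and recovers it by a greedy growing procedure. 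You work edge-by-edge, giving a direct witness criterion for non-edges; your structural input is weaker---you only need the component of $u$ in $G[V\setminus N[v]]$ to have size at least $k$, not that the whole non-neighbourhood is connected---and your algorithm is a straight brute-force over candidate witnesses $T$. The paper's approach yields a slightly stronger structural statement about $G$; yours is more self-contained and avoids the greedy subroutine.
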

In particular, for `almost every graph' (combining Lemma \ref{randomgraphproperties} and Theorem \ref{thm:reconstr_random}), the connectivity of $k$-tuples for $k\geq \frac12 n +4 \sqrt{n \log n}$ gives no information whatsoever, whereas for $k\leq \frac12 n -4\sqrt{n\log n}$ it completely determines the graph.
\begin{proof}[of Theorem \ref{thm:reconstr_random}]
Let  $K$ be the set of connected $k$-sets.

We first prove the second part of the statement. Let $k\geq \frac12n + 4 \sqrt{n\log n}$ be an integer and let $S$ be a subset of $V$ of size at least $k$. Consider two vertices $u,v \in S$. We will prove that $u$ and $v$ are in the same connected component of $G[S]$. By the first random-like property, there are at most $\frac12n + 3 \sqrt{n\log n}$ vertices non-adjacent to $u$, which implies that $u$ has at least $\sqrt{n \log n}-1\geq 2\log n$ neighbours in $S$. Note that, for the same reason, this is also true for $v$. Therefore, we can apply the third random-like property on $A=N(u) \cap S$ and $B=N(v) \cap S$ to ensure that there is an edge between the two sets. We conclude that there must exist a path from $u$ to $v$. 

Let us now prove the first part of the statement. Let $2\leq k \leq \lfloor \frac12n - 4 \sqrt{n\log n} \rfloor$. Let $u \in V(G)$.
We claim that the set of vertices $V\setminus N[v]$ that are not adjacent or equal to $u$, is the largest set $S$ such that $G[S]$ is connected and $G[S \cup \{u\}]$ is not. Note that the two conditions directly imply that $S \subseteq V\setminus N[u]$. To prove equality, it is sufficient to prove that $G[V \setminus N[u]]$ is connected. Consider two vertices $v,w\in G[V \setminus N[u]]$. By the second random-like property, the sets $A= N(v) \cap (V\setminus N[u])$ and $B=N(w) \cap (V\setminus N[u])$ have size at least $n/4 - 3 \sqrt{n\log n}$.
By the third property of random-like, there is therefore an edge between  $A$ and~$B$. This proves that there is a path between $v$ and $w$ using vertices in $V \setminus N[u]$, and so $G[V \setminus N[v]]$ is connected.

For each vertex $u$, the set of vertices it is not adjacent to can now be found by finding the largest set $S$ such that $G[S]$ is connected and $G[S \cup \{u\}]$ is not. Since any such $S$ is a subset of $V\setminus N[u]$, there is a unique maximal (and unique maximum) such $S$. We now give the $O(n^{k+1})$ time algorithm for this. 

We begin by constructing a data structure like the one described by Fredman et al.~\cite{Fredman1982}, which allows us to query the required $k$-sets in $O(1)$. This takes deterministic time of $O(|K|\log |K|)=O(n^k k)=O(n^{k+1})$. For a vertex $v$, we give an algorithm to reconstruct the neighbourhood of $v$ in time $O(n^{k+1})$.
\begin{enumerate}
    \item We first run over the subsets $S$ of size $k$ until we find one for which $G[S]$ is connected but $G[S\cup \{v\}]$ is not. This can be done in time $O(kn^{k})$: if $G[S]$ is connected, then $G[S\cup \{v\}]$ is disconnected if and only if $G[S\setminus\{s\} \cup \{v\}]$ is disconnected for all $s\in S$.
    \item For each vertex $w\in V\setminus(S\cup \{v\})$, we check whether there is a subset $U\subseteq S$ of size $k-1$ for which $U\cup \{w\}$ is connected, and whether for each subset $U'\subseteq S$ of size $k-2$, $U'\cup \{w,v\}$ is not connected. If both are true for the vertex $w$, then $G[S\cup \{w\}]$ is connected and $G[S\cup \{w,v\}]$ is not connected, so we add $w$ to $S$ and repeat this step.
    \item If no vertex can be added anymore, we stop and output $V\setminus (S\cup \{v\})$ as the set of neighbours of $v$.
\end{enumerate} 
We repeat step 2 at most $n$ times, and each time we try at most $n$ vertices as potential $w$ and run over subsets of size at most $k-1$. Hence, this part runs in time $O(n^{k+1})$.
We repeat the algorithm above $n$ times (once per vertex) in order to reconstruct all edges.
\qed \end{proof}
We prove the following analogue to Theorem \ref{thm:triangle-free} in Appendix \ref{app2}.
\begin{reptheorem}{biggirth}
Let $k \geq 4$ be an integer. Every graph on at least $2k -1$ vertices that is known to have no cycles of length at most $k$ is determined by its connected $k$-sets.
\end{reptheorem}

\section{Conclusion}
We have presented a new model of uncertainty in graphs, in which we only receive all triples of vertices that form a connected induced subgraph.
In a way, this is the simplest model of combinatorial indeterminacy in graphs.
We have studied some basic properties of this model, and provided an algorithm for finding a graph consistent with the given indeterminacies.
We also proved that trees, triangle-free graphs and various other families of graphs are determined by the connected triples, although we need to know the family the sought graph belongs to. In order to obtain a full characterisation, it is natural to put conditions on the way a triangle may connect to the rest of the graph, for instance, it is not too difficult to recognise that $a,b,c$ induces a triangle if at least two of $a,b,c$ have private neighbour, whereas it is impossible to distinguish whether $a,b,c$ induce a triangle or a path if all three vertices have the same neighbours outside of the triangle. We leave this open for future work.
 
Similar to what has been done for graph reconstruction (see e.g. \cite{LS16}), another natural direction is to loosen the objective of reconstruction, and to see if rather than determining the (labelled) graph, we can recover some graph property such as the number of edges or the diameter. A natural question is also how many connected triples are required (when given a `subcollection', similar to \cite{BBF10,GroenlandGuggiarScott2020,M92}, or when we may perform adaptive queries, as in~\cite{kannan2018reconstruction}). 

We gave various results in an extension of our model to larger \(k\)-sets, including trees and random graphs.
There are several other logical extensions to the concept of reconstructing a graph from connected triples.
We could define a \((k, \ell)\)-representation \(T\) to contain all \(k\)-sets that are connected and contain at least \(\ell\) edges.
The definition of connected triples would then be a \((3, 2)\)-representation.
Note that in this case some vertices may not appear in \(T\), or \(T\) might even be empty altogether (e.g. for trees when \(\ell \geq k\)).
Another natural extension would be to specify the edge count for each $k$-set, but this gives too much information even when $k=n-2$: the existence of any edge $\{u,v\}$ can be determined from the number of edges among vertices in the four sets $V,V\setminus\{u\},V\setminus\{v\}$ and $V\setminus\{u,v\}$.

Some interesting algorithmic questions remain open as well. In particular, we presented an efficient algorithm to specify whether a collection of connected $3$-sets uniquely determines a graph, but do not know how to solve this efficiently for larger values of $k$. Is the following decision problem solvable in polynomial time: given a graph $G$ and an integer $k$, is $G$ determined by its collection of $k$-tuples? We note that when $k$ is fixed to 4, membership in coNP is clear (just give another graph with the same connected $4$-sets) whereas even NP-membership is unclear. 

Finally, a natural question is whether the running time of $O(n\cdot |T|)$ for finding a consistent graph with a set of connected triples (Lemma \ref{lem:algorithm}) can be improved to $O(|T|)$ (or expected time $O(|T|)$).

\bibliographystyle{splncs04}
\bibliography{bibliography}

\appendix

\section{Omitted proofs for connected triples}
In this appendix, we give the proofs that were omitted from Section \ref{sec:unique-graphs}.

\subsection{Reconstructing 2-connected outerplanar graphs}\label{subsec:outerplanar}
We show that 2-connected outerplanar graphs \(G\) on at least six vertices are determined by their set of connected triples \(T\).

We begin by describing some properties of planar and outerplanar graphs. Wagner's Theorem \cite{wagner} states that a graph is planar if and only if it does not contain the complete graph $K_5$ or the complete bipartite graph $K_{3,3}$ as a minor. 
It is an easy consequence that outerplanar graphs are exactly the graphs that do not contain $K_4$ or $K_{2,3}$ as a minor. 
An easy structural analysis shows that 
every outerplanar graph has a vertex of degree at most two. Since this fact is well-known we omit the proof, but it can be found in e.g. \cite{westPlanarGraph} (as Proposition 6.1.19). 

Our approach is similar to the one for trees: we show that we can identify a vertex of degree two, and remove it from the graph by merging it with one of its neighbours.
We keep doing this until we have a graph with six vertices, which can be distinguished by the number of triples in which each vertex and edge occurs.
We first observe that we can recognise vertices of degree two. 
We will need the following lemma which we obtain by applying Menger's well-known theorem  (see \cite{westPlanarGraph} for further background).

\begin{lemma}\label{lem:fourcycle-deg-2}
Let $G$ be a 2-connected outerplanar graph on at least five vertices.
Suppose $C$ is a $4$-cycle in $G$.
Suppose $a,b \in C$ each have no neighbours in $V(G) \setminus V(C)$.
Then $a,b$ must be consecutive in $C$.
\end{lemma}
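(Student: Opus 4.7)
The plan is to prove the lemma by contradiction via a $K_{2,3}$-minor. Suppose $a$ and $b$ are not consecutive on $C$; writing $C = a v_2 b v_4 a$, the vertices $a, b$ are then diagonally opposite, so both $v_2 a v_4$ and $v_2 b v_4$ are paths of length two inside $C$. Since $|V(G)| \geq 5$, the set $V' := V(G) \setminus V(C)$ is nonempty, and by hypothesis every edge from $V'$ into $V(C)$ lands in $\{v_2, v_4\}$. Since outerplanar graphs forbid a $K_{2,3}$-minor, producing such a minor in $G$ will give the desired contradiction.

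The blueprint for the minor is to locate a connected subgraph $D \subseteq V'$ that is adjacent to both $v_2$ and $v_4$. Together with the singleton branch sets $\{v_2\}, \{v_4\}, \{a\}, \{b\}$, the set $V(D)$ would supply the five branch sets of a $K_{2,3}$-minor with parts $\{v_2, v_4\}$ and $\{a, b, D\}$: the four edges of $C$ already provide the $v_i a$ and $v_i b$ superedges, and the $v_i$--$D$ superedges are exactly what we aim to construct.

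The core of the argument uses 2-connectedness (Menger's theorem). I would first show that $G[V' \cup \{v_4\}]$ is connected. Since $G$ is 2-connected, $G - v_2$ is connected, so any $u \in V'$ has a path to $v_4$ in $G - v_2$. The first vertex of $C$ met on that path cannot be $a$ or $b$, because the predecessor of this first $C$-vertex lies outside $C$, and $a, b$ have no neighbours outside $C$. Hence the first $C$-vertex on the path is $v_4$, and the portion of the path up to $v_4$ lies entirely in $V' \cup \{v_4\}$. By the symmetric argument (using that $G - v_4$ is connected), $G[V' \cup \{v_2\}]$ is connected as well. Consequently, every connected component of $G[V']$ has a direct edge to $v_2$ and a direct edge to $v_4$, since inside the respective induced subgraphs the only way out of the component is through such a direct edge.

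Choosing any component $D$ of $G[V']$, which exists because $V' \neq \emptyset$, then finishes the proof: the five pairwise-disjoint branch sets $\{v_2\}, \{v_4\}, \{a\}, \{b\}, V(D)$ each induce a connected subgraph and together realize a $K_{2,3}$-minor in $G$, contradicting outerplanarity. The main obstacle is establishing that each component of $G[V']$ is adjacent to both $v_2$ and $v_4$; this is where the hypothesis that $a, b$ have no outside neighbours is essential, since otherwise a component could potentially reach $v_4$ via a detour through $a$ or $b$ and the $G - v_2$ connectivity argument would break down.
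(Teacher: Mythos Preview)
Your proof is correct and follows essentially the same approach as the paper: argue by contradiction and exhibit a $K_{2,3}$-minor with parts $\{v_2,v_4\}$ and $\{a,b,D\}$, where $D$ is a connected piece of $V(G)\setminus V(C)$ adjacent to both $v_2$ and $v_4$. The only minor difference is that the paper invokes Menger's fan lemma to obtain two internally disjoint paths from a single outside vertex $r$ to $\{v_2,v_4\}$, whereas you use the more elementary observation that $G-v_2$ and $G-v_4$ are connected to show every component of $G[V']$ touches both $v_2$ and $v_4$; both arguments establish the same connecting subgraph.
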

\begin{proof}
    Suppose that $a, b$ are not consecutive in $C$.
    Denote the other two vertices of $C$ by $c, d$.
    Since $|V(G)| \geq 5$ there is some vertex $r \in V(G) \setminus C$.
    By Menger's theorem, there are paths $P$ and $Q$ starting at $r$ and ending in $\{a,b,c,d\}$ such that $P$ and $Q$ are vertex disjoint (except for $r$).
    Then $P \cup Q$ is a path between $b$ and $d$.
    But then $C \cup P \cup Q$ is a $K_{2,3}$-minor, contradicting the fact that $G$ is outerplanar. 
\qed \end{proof}

\begin{lemma}\label{lem:degree-2-vertex}
    Let $G$ be a 2-connected outerplanar graph on at least six vertices.
    A vertex \(v \in V(G) \) has degree two, and its neighbours are \(w_1\) and \(w_2\), if and only if all triples \(v\) appears in also contain \(w_1\) or \(w_2\), and there is no other vertex $v'$ for which all triples $v'$ appears in also contain $w_1$ or $w_2$.
\end{lemma}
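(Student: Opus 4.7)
I would establish the biconditional by treating the two directions separately, and in the forward direction I would handle both sub-cases of ``another bad vertex'' uniformly via a $K_{2,3}$-minor argument.

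\textbf{Forward direction.} Suppose $v$ has degree two with neighbours $w_1,w_2$. A connected triple $\{v,a,b\}$ must contain two of the edges $va, vb, ab$, and since $v$ is incident only to $w_1,w_2$, at least one of $a,b$ lies in $\{w_1,w_2\}$. For the uniqueness half, suppose for contradiction that some $v'\neq v$ also has the property. A direct triple argument shows that $v'$ has at most one neighbour $u\notin\{w_1,w_2\}$ (else $\{v',u_1,u_2\}$ would be a connected triple through $v'$ avoiding $\{w_1,w_2\}$), and whenever such a $u$ exists we must have $N(u)\subseteq\{v',w_1,w_2\}$ (else $\{v',u,z\}$ for $z\in N(u)\setminus\{v',w_1,w_2\}$ would violate the property).

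Set $U := \{v'\}$ if $N(v')\subseteq\{w_1,w_2\}$ and $U := \{v',u\}$ otherwise. The restrictions above imply that $\{v\}$ and $U$ are two disjoint connected components of $G - \{w_1,w_2\}$. Since $n\geq 6$, at least one vertex lies outside $U\cup\{v,w_1,w_2\}$, contributing a further component of $G-\{w_1,w_2\}$. By $2$-connectivity of $G$, each component of $G-\{w_1,w_2\}$ must be joined to both $w_1$ and $w_2$, for otherwise one of $w_1,w_2$ would be a cut vertex. This produces three internally vertex-disjoint $w_1$-$w_2$ paths in $G$, which yield a $K_{2,3}$-minor. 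This contradicts outerplanarity.

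\textbf{Converse.} Suppose $v$ satisfies the property and no other vertex does. Assume for contradiction that $v$ has a neighbour $u\notin\{w_1,w_2\}$. Then $v$'s property forces $N(u)\subseteq\{v,w_1,w_2\}$, since otherwise $\{v,u,z\}$ for $z\in N(u)\setminus\{v,w_1,w_2\}$ would be a connected triple through $v$ avoiding $\{w_1,w_2\}$. I would then verify that $u$ inherits the property. Indeed, for any connected triple $\{u,a,b\}$, $u$ is incident to at least one of the two selected edges, so WLOG $a\in N(u)\subseteq\{v,w_1,w_2\}$; either $a\in\{w_1,w_2\}$ and we are done, or $a=v$, in which case $\{u,v,b\}$ is a connected triple through $v$ and $v$'s property yields $b\in\{w_1,w_2\}$. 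Hence $u\neq v$ shares the property, contradicting uniqueness. So $N(v)\subseteq\{w_1,w_2\}$, and $2$-connectivity forces $N(v)=\{w_1,w_2\}$.

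The main obstacle is the uniqueness half of the forward direction: the triple-level constraints on $v'$ (and its possible extra neighbour $u$) must be read as geometric constraints, showing that $\{w_1,w_2\}$ separates $G$ into at least three components, at which point $K_{2,3}$-minor-freeness closes the argument. The converse then follows by propagating the property from $v$ to any hypothetical extra neighbour and invoking uniqueness.
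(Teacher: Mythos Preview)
Your proof is correct, and the converse direction is essentially the paper's argument. The forward direction, however, takes a genuinely different and cleaner route than the paper.

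For the uniqueness half of the forward direction, the paper argues by cases on whether $v'$ is adjacent to both of $w_1,w_2$ or only one of them, and in each case it constructs a specific $4$-cycle and invokes the auxiliary \cref{lem:fourcycle-deg-2} (sometimes after contracting an edge) to reach a contradiction. Your argument bypasses this entirely: from the triple constraints you extract that $\{v\}$ and $U$ are full components of $G-\{w_1,w_2\}$, then use $n\geq 6$ to force a third component, and conclude directly via the $K_{2,3}$-minor coming from three internally disjoint $w_1$--$w_2$ paths. This is more uniform (no case split on the adjacencies of $v'$) and self-contained (no need for \cref{lem:fourcycle-deg-2}); the paper's approach, on the other hand, isolates a reusable structural fact about $4$-cycles in $2$-connected outerplanar graphs that may be of independent interest. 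One small point worth making explicit in your write-up: you should note, as the paper does, that $v'\notin\{w_1,w_2\}$, since $w_1$ and $w_2$ vacuously satisfy the triple condition and must be excluded for the argument (and the lemma) to make sense.
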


\begin{proof}
    Suppose $v$ has exactly two neighbours, $w_1$ and $w_2$.
    Then, clearly $v$ will only appear in triples with these vertices.
    Suppose for a contradiction that there is some $v' \in V(G) \setminus \{w_1, w_2, v\}$ such that every triple containing $v'$ also contains $w_1$ or $w_2$. It follows that $v'$ has at most one neighbour in $V(G) \setminus \{w_1, w_2\}$.

    Suppose $v'$ is adjacent to both $w_1$ and $w_2$.
    Then $v' \dd w_1 \dd v \dd w_2 \dd v'$ is a four cycle $C$ in $G$.
    So by Lemma \ref{lem:fourcycle-deg-2}, $v'$ must have a neighbour $r \in V(G) \setminus V(C)$.
    Moreover, $N(r) \subseteq V(C)$ because otherwise $v'$ occurs in some triple not containing $w_1, w_2$.
    Hence, $G \setminus r$ is $2$-connected.
    Then, by applying Lemma \ref{lem:fourcycle-deg-2} to $v,v'$ and $C$ in $G \setminus r$, we obtain a contradiction.
    Thus, we may assume that $v'$ is non-adjacent to $w_2$.

    $G$ is 2-connected, so $v'$ has degree at least two.
    It follows that $v'$ is adjacent to $w_1$.
    Since $G$ is 2-connected, there is a shortest $v'w_2$-path $P_2$ in $G \setminus w_1$.
    Let $z$ be the neighbour of $v'$ in $P_2$. Then $z \neq w_2$.
    Then $N(z) \subseteq \{v', w_1, w_2\}$ because otherwise $v'$ would be in a triple not containing $w_1$ or $w_2$.
    Thus, $v \dd z \dd w_2$ is a path of $G$.
    Hence, the graph $G'$ obtained by deleting $z_2$ and adding the edge $w_1w_2$ is 2-connected.
    Since outerplaner graphs are minor-closed, $G'$ is  outerplanar.
    By Lemma \ref{lem:fourcycle-deg-2} applied to the cycle $v \dd w_1 \dd v' \dd w_2 \dd v$ in $G'$ we obtain a contradiction since $v, v'$ both have no neighbour in $V(G') \setminus \{v, w_1, v', w_2,\}$. 
    
    Conversely, suppose all triples \(v\) appears in also contain \(w_1\) or \(w_2\), and there is no other vertex $v'$ for which all triples $v'$ appears in also contain $w_1$ or $w_2$.
    Suppose $v$ has some neighbour $z \neq w_1, w_2$.
    Since $v$ only appears in triples with $w_1$ and $w_2$ it follows that $N(z) \subseteq \{w_1, w_2, v\}$.
    But then $z$ only appears in triples with $w_1$ and $w_2$, a contradiction.
    Hence, $w_1$ and $w_2$ are the only possible neighbours of $z$.
    Since $G$ is 2-connected, $v$ has degree at least two.
    So $v$ has degree two and it is adjacent to $w_1$ and $w_2$.
\qed \end{proof}

\begin{figure}[t]
    \centering
    \includegraphics{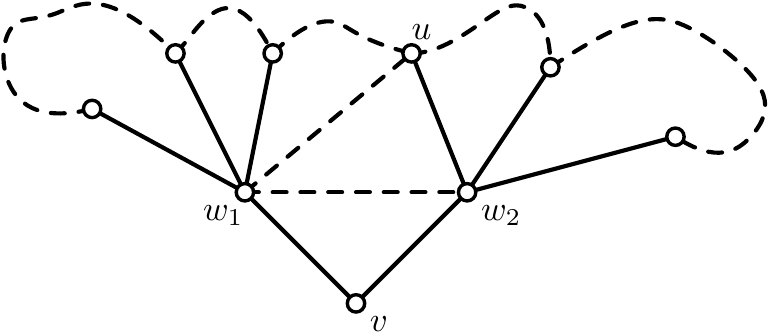}
    \caption{The local neighbourhood of a degree two vertex in a 2-connected outerplanar graph on at least 6 vertices.
    The dashed edges \(w_1w_2\) and \(w_1u\) may or may not be present.}
    \label{fig:degree-two-vertex}
\end{figure}

We will use Lemma \ref{lem:degree-2-vertex} to reduce the problem of recognising a 2-connected outerplanar graph on at least \emph{seven} vertices to recognising a 2-connected outerplanar graph on \emph{six} vertices.
This significantly, simplifies the situation as there are only nine distinct 2-connected outerplanar graphs on six vertices up to isomporphism (see Figure \ref{fig:6-vertex-outerplanar}).

\begin{figure}[t]
    \centering
    \includegraphics{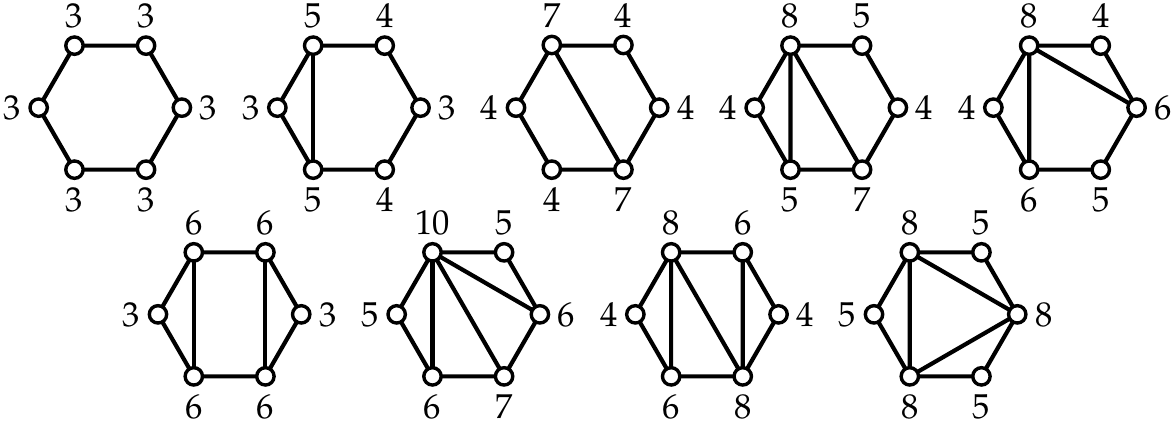}
    \caption{All base cases for a 2-connected outerplanar graph of six vertices, up to symmetries. Each vertex is labelled by the number of triples it occurs in.
    Note that each case can be identified by the number of triples each vertex occurs in.}
    \label{fig:6-vertex-outerplanar}
\end{figure}

\begin{observation}\label{obs:six}
Every graph on six vertices that is known to be 2-connected and outerplanar can be reconstructed. 
\end{observation}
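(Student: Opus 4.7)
The plan is a finite case analysis: enumerate the 2-connected outerplanar graphs on six vertices, and verify that no two distinct such labelled graphs yield the same triple set.

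First, I would enumerate the isomorphism classes. Since a 2-connected outerplanar graph has a unique Hamiltonian cycle (its outer boundary) plus a set of pairwise non-crossing chords, listing inequivalent chord systems on a hexagon up to the dihedral action yields exactly the nine classes depicted in Figure \ref{fig:6-vertex-outerplanar}.

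Second, I would show that the isomorphism class of the underlying graph can be read off the triple set. For each of the nine classes I compute, for every vertex, the number of connected triples containing it. A direct tally (recorded in the figure) produces nine pairwise distinct multisets, so comparing the triple-count multiset obtained from the input $T$ against this table identifies the isomorphism class $[H]$.

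Third, I would argue that within each class the labelling is also forced. When every vertex of $H$ has a distinct triple-count, the label of each vertex is determined by its count, and we are done. For the symmetric classes the labelling must be deduced from the co-occurrence pattern of triples: for instance, in $C_{6}$ two triples share exactly two vertices iff they correspond to overlapping sets of three consecutive cycle-vertices, so $T$ determines the cyclic order of the labels; analogous short overlap-based arguments handle the remaining symmetric classes (e.g.\ a cycle with a single short chord, or with a single long chord).

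The main obstacle is this last step. The three-vertex path example already shows that an isomorphism class need not be label-rigid (there all three labellings produce the same single triple $\{1,2,3\}$), so one genuinely has to inspect each of the nine classes separately. Fortunately, each check is short: for labels with equal triple-count one records how many triples each such pair of labels co-occurs in, and this refined statistic pins down each label's role. The finite number of classes makes the entire case analysis routine.
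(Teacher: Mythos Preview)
Your approach is essentially the same as the paper's: both enumerate the nine isomorphism classes, use the multiset of per-vertex triple counts (Figure~\ref{fig:6-vertex-outerplanar}) to identify the isomorphism type, and then finish off the labelling with a short per-class check. The only cosmetic difference is that the paper invokes Lemma~\ref{lem:degree-2-vertex} to pin down each degree-two vertex together with its two neighbours, whereas you use pairwise co-occurrence counts in $T$; both fall under what the paper calls ``some easy extra analysis of the triples''.
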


For each vertex $v$ in a graph $G$ let $\occ(v)$ denote the number of times $v$ occurs in a triple in $G$ and let $\occ(G)$ denote the unordered list of values of $\occ(w)$ for each $w \in V(G)$.
Each 2-connected outerplanar graph $G$ on six vertices has a unique list $\occ(G)$; see \cref{fig:6-vertex-outerplanar} for an illustration.
Then, by this fact, Lemma \ref{lem:degree-2-vertex} and some easy extra analysis of the triples we can reconstruct any graph on six vertices that is known to be 2-connected and outerplanar.

\repeattheorem{twoconnected}

%\linda{Added this paragraph:}
Note that Theorem \ref{twoconnected} is tight in the sense that it does not hold for $n = 4,5$ and holds trivially when $n = 3$.
For $n = 5$, consider the graph $H$ that is obtained from a $P_4$ by adding a new vertex complete to everything. We can determine whether a graph $G$ is isomorphic to $H$ from its triples and the knowledge that $G$ is 2-connected and outerplanar, but we cannot determine the order of the vertices of the $P_4$ (See Figures \ref{fig:label-ambiguity} and \ref{fig:ambiguous-classes}).
For $n = 4$, note that we cannot distinguish between $C_4$ and the graph obtained from $C_4$ by adding a chord since all sets of three vertices will be connected for both graphs.

\begin{proof}[Proof of \cref{twoconnected}]
    By Observation \ref{obs:six}, the theorem holds when $n = 6$.
    Suppose $G$ is outerplanar and 2-connected and that $G$ has at least seven vertices and suppose the theorem holds for $n = 6, 7, \dots, |V(G) -1$. 
    
    Since $G$ is outerplanar and  2-connected, $G$ has a vertex $v$ of degree two. Suppose $v$ has a neighbour $u$ of degree two.
    We can recognise $u, v$ and their neighbours by Lemma \ref{lem:degree-2-vertex}. We record this information.
    Contract the edge $\{u, v\}$ to a single vertex $x$ and call the resulting graph $G'$.
    Then $G'$ is $2$-connected and outerplanar.
    We obtain the list of triples for $G'$ from the list of triples of $G$ by removing all triples that contain both $u$ and $v$ and then replacing any remaining occurrence of $u, v$ with $x$.
    By induction, we can reconstruct $G'$, and therefore $G$, from its list of triples.

    Hence, we may assume the neighbours of $v$ are both of degree at least three. Let $w_1, w_2$ denote the neighbours of $v$.
    %This gives us a degree two vertex \(v\) with neighbours \(w_1\) and \(w_2\).
    By Lemma \ref{lem:degree-2-vertex}, we can identify $v, w_1, w_2$.
    We also know all neighbours of $w_1, w_2$: A vertex $x \in V(G) \setminus \{w_1, w_2, v\}$ is adjacent to $w_1$ ($w_2$) if and only if $vxw_1$ (resp. $vxw_2$) is a triple. 
    This does not tell us if \(w_1\) and \(w_2\) are neighbours, as we have the triple \(uw_1w_2\) either way.
    However, we do know that, as $G$ is outerplanar, \(w_1\) and \(w_2\) can share at most two neighbours, one of which is \(v\) (see \cref{fig:degree-two-vertex} for an illustration).
    %Let \(u\) be the other neighbour; 
    This means both \(w_1\) and \(w_2\) have at least one neighbour that is not a neighbour of the other vertex, as they are both degree at least three.
    Hence, $w_1$ and $w_2$ are non-adjacent if and only if there is some vertex $x$ in $G$ such that $w_1w_2x \not \in T$ but $vw_1x \in T$ or $vw_2x \in T$.
    Thus, we can determine if $w_1$ and $w_2$ are neighbours.

    If the edge \(w_1w_2\) exists, then $G \setminus v$ is still 2-connected.
    Therefore, we can remove \(v\) and all triples.
    If the edge $w_1w_2$ does not exist, we can add it and add all the triples \(w_1w_2x\) for each neighbour \(x\) of \(w_1\) or \(w_2\), then remove \(v\) and all triples in which it occurs.
    In either case we obtain a 2-connected outerplanar graph $G'$ with one fewer vertex and its list of triples, so we can reconstruct $G'$ by induction.  
    Moreover, in both cases we can reconstruct $G$ from $G'$ since we know $v, w_1, w_2$ and their incident edges.
\qed \end{proof}

%%\subsection{Reconstructing triangulated planar graphs}
\subsection{Reconstructing triangulated planar graphs}
\label{subsec:planar}
A \emph{triangulated planar graph}, also called a \emph{maximal planar graph}, is a planar graph where every face (including the outer face) is a triangle.
By definition triangulated planar graphs on more than three vertices are $3$-connected.
Note that chordal graphs (graphs in which every cycle of more than three vertices has a chord) are sometimes also called triangulated graphs, but this is not the meaning we use here. Every triangulated planar graph is chordal, but the reverse direction is trivially false.

To show that triangulated planar graphs on at least seven vertices can be uniquely reconstructed, 
we take a similar approach as in the proof of \cref{twoconnected}.
We argue that we can recognise all vertices of degree at least five with their neighbours in cyclic order in a graph that is known to be a $4$-connected triangulated planar graph on at least six vertices.
We then argue that we can reconstruct all graphs on at least six vertices that are known to be $4$-connected triangulated planar graphs.
Finally, we will use this result to show we can reconstruct every graph on at least seven vertices that is known to be triangulated and planar.
We begin with some easy observations.

For a connected graph $G$ we say a set of vertices $S \subset V(G)$ satisfying $\emptyset \neq S \neq V(G)$ is a \emph{separator} if $G \setminus S$ is disconnected.
A \emph{separating triangle} is a separator of cardinality three that induces a triangle.

\begin{observation}
    \label{obs:4conn}
    Every separator of cardinality 3 in a triangulated planar graph on at least five vertices is a separating triangle. 
\end{observation}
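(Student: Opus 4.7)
The plan is to reduce the observation to the following structural fact about planar triangulations: the \emph{link} of every vertex $v$ is a simple cycle on $N(v)$. Concretely, if $v$ has neighbours $u_1,\dots,u_d$ listed in their rotational order around $v$ in a fixed plane embedding, then $u_iu_{i+1}\in E(G)$ for each $i$ (indices taken mod $d$), because the face sitting in the corner between the edges $vu_i$ and $vu_{i+1}$ is a triangle and therefore contains the edge $u_iu_{i+1}$. The $3$-connectivity noted in the paragraph preceding the observation guarantees that each neighbour appears exactly once in the rotation, so the resulting link cycle $\Gamma_v$ is simple.

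Setup. Let $S=\{a,b,c\}$ be a separator of $G$. Since $G$ is $3$-connected, $S$ is minimal, so every connected component $C$ of $G-S$ contains at least one neighbour of each of $a$, $b$, and $c$. Fix two distinct components $A$ and $B$ of $G-S$. By the symmetry between the three roles, it suffices to prove $a\sim b$ and $a\sim c$ from this setup; a single further invocation with $a$ replaced by $b$ will then give $b\sim c$.

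Main argument. Consider $\Gamma_a$, the link cycle on $N(a)$, and partition its vertices into $N_A=N(a)\cap A$, $N_B=N(a)\cap B$, and $N(a)\cap\{b,c\}$. Both $N_A$ and $N_B$ are nonempty by minimality of $S$. No vertex of $N_A$ can be $\Gamma_a$-adjacent to a vertex of $N_B$, for such an adjacency would be an edge of $G$ joining $A$ to $B$, contradicting that they lie in different components of $G-S$. Hence every cyclic transition between an $N_A$-arc and an $N_B$-arc along $\Gamma_a$ passes through a vertex of $\{b,c\}$. A cyclic sequence containing at least one element of each of $N_A$ and $N_B$ must perform at least two such transitions, so $N(a)\cap\{b,c\}$ has at least two elements, i.e.\ both $b,c\in N(a)$. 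This yields $a\sim b$ and $a\sim c$, and the symmetric argument applied to $\Gamma_b$ produces $b\sim c$, so $\{a,b,c\}$ induces a triangle.

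The argument presents no real obstacle; the only step requiring care is the degenerate case $|A|=1$, say $A=\{u\}$. Here $u$ has no neighbours outside $S\cup\{u\}$, so its two neighbours on the simple cycle $\Gamma_a$ both lie in $S\setminus\{a\}=\{b,c\}$, forcing $b,c\in N(a)$ exactly as before. The hypothesis $n\geq 5$ enters only to guarantee that a separator of cardinality $3$ can exist and that $G-S$ genuinely has two distinct nonempty components.
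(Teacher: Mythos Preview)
Your argument is correct and takes a genuinely different route from the paper. The paper disposes of the observation in two sentences: triangulated planar graphs on more than three vertices are $3$-connected, so any size-three separator is minimal, and then it cites (without proof) the folklore fact that minimal separators of maximal planar graphs are cliques. You instead prove this last implication directly via the link cycle $\Gamma_a$---essentially the same structure the paper itself establishes later in \cref{lem:nbd-of-degree-6} for a different purpose. Your approach is more self-contained; the paper's is shorter because it outsources the work to a known result.

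Two minor presentational points. First, writing ``partition its vertices into $N_A$, $N_B$, and $N(a)\cap\{b,c\}$'' tacitly assumes that $A$ and $B$ are the only components of $G-S$, which has not yet been shown (the paper derives it only afterwards, in \cref{obs:defining-the-recursion}, \emph{using} the present observation). The repair is immediate and your argument already contains it: no two $\Gamma_a$-consecutive vertices can lie in distinct components of $G-S$, so deleting the at most two vertices of $\{b,c\}\cap N(a)$ from $\Gamma_a$ leaves at most two arcs, each contained in a single component; since $N_A$ and $N_B$ are both nonempty, both $b$ and $c$ must appear on $\Gamma_a$. Second, simplicity of $\Gamma_a$ really comes from simplicity of $G$ (one edge per neighbour, hence one slot in the rotation) together with $\deg(a)\geq 3$; $3$-connectivity supplies the degree bound but not the uniqueness in the rotation. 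Your separate treatment of $|A|=1$ is correct but unnecessary, as the main argument already covers it.
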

Indeed, by definition triangulated planar graphs do not have separators of size at most two.
Moreover, minimal separators of triangulated planar graphs must be cliques.
It is a consequence of Euler's formula (given below) that every triangulated planar graph on seven vertices contains a vertex of degree five.
We call the 4-regular planar graph on six vertices the octahedron (See \cref{fig:6-vertex-triangulated-cases} for a depiction).

\begin{theorem}[Euler's formula]\label{thm:euler}
    Let $G$ be a connected plane graph.
    Then $$n - e + f = 2$$ where $n,e,f$ are equal to the number of vertices, edges and faces of $G$, respectively. 
\end{theorem}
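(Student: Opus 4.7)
The plan is to prove Euler's formula $n - e + f = 2$ by strong induction on the number of edges $e$ of the connected plane graph $G$. The base case $e=0$ is immediate: connectivity forces $G$ to be a single isolated vertex embedded in the plane with only the unbounded face, so $n - e + f = 1 - 0 + 1 = 2$.

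For the inductive step with $e \geq 1$, I would split into two cases depending on whether $G$ contains a cycle. In the acyclic case, $G$ is a tree, so it contains a leaf $v$; deleting $v$ and its incident edge produces a connected plane graph $G'$ with $n-1$ vertices and $e-1$ edges, and removing a pendant edge does not merge any pair of faces, so $G'$ still has $f$ faces. The inductive hypothesis gives $(n-1) - (e-1) + f = 2$, which rearranges to $n - e + f = 2$. In the cyclic case, pick an edge $uv$ lying on some cycle $C$ of $G$ and let $G' = G - uv$. Then $G'$ remains connected, since $C - uv$ provides an alternate $uv$-path. The edge $uv$ is incident to exactly two distinct faces of $G$, and deleting $uv$ merges these two faces into a single face of $G'$ while leaving all other faces unchanged; hence $G'$ has $n$ vertices, $e-1$ edges, and $f-1$ faces. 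By the inductive hypothesis, $n - (e-1) + (f-1) = 2$, giving the desired identity.

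The main obstacle is the topological claim used in the cyclic case: that an edge lying on a cycle borders two \emph{distinct} faces, and that removing it merges exactly those two faces into one connected region. This relies on the Jordan curve theorem applied to the cycle $C$, which guarantees that $C$ separates the plane into two open regions whose common boundary contains a neighborhood of $uv$; the faces of $G$ adjacent to $uv$ therefore lie on opposite sides of $C$ locally and are distinct, and after deletion a small open neighborhood of the erased edge glues them into a single open region. I would invoke the Jordan curve theorem as a black box, as is standard in textbook treatments, since developing the underlying planar topology from scratch is well outside the scope of the paper.
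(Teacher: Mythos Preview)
Your proof is correct and follows the standard textbook induction on the number of edges. However, the paper does not actually prove this theorem: it is stated as a classical result (attributed by name to Euler) and invoked only as a tool to obtain Corollary~\ref{corr:euler-degree5}. There is therefore no ``paper's own proof'' to compare against; your argument is simply supplying a proof where the paper deliberately omits one, which is fine but unnecessary in context.
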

\begin{corollary}\label{corr:euler-degree5}
    Every triangulated planar graph with no vertex of degree at least five contains at most six vertices.
    Moreover, the octahedron is the unique triangulated planar graph on six vertices with no vertex of degree five. 
\end{corollary}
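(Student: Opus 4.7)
The plan is to derive both parts from Euler's formula together with the handshake lemma applied to a triangulated planar graph. Since every face is a triangle, each face contributes three edge-incidences while each edge borders exactly two faces, so $3f = 2e$, and substituting this into Euler's formula $n - e + f = 2$ yields the standard identity $e = 3n - 6$. Hence $\sum_v \deg(v) = 2e = 6n - 12$.

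For the first claim, suppose every vertex has degree at most four. Then $\sum_v \deg(v) \leq 4n$, so $4n \geq 6n - 12$, which gives $n \leq 6$. This immediately shows that any triangulated planar graph with no vertex of degree at least five has at most six vertices.

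For the second claim, I specialise to $n = 6$ and no vertex of degree five. Note that in any triangulated planar graph on at least four vertices the minimum degree is at least three (each vertex is incident to some triangular face, and in fact lies on at least two such faces, so has at least three neighbours). Combined with the assumption that the maximum degree is at most four (no degree five, and by the degree bound no degree six either since $6 > 4$ would be degree at least five), every vertex has degree in $\{3, 4\}$. The degree sum $\sum_v \deg(v) = 6 \cdot 6 - 12 = 24$ then forces every vertex to have degree exactly four: if any vertex had degree three, the remaining five vertices would need total degree $21$, exceeding the maximum $5 \cdot 4 = 20$. So the graph is $4$-regular on six vertices.

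The remaining step, and the only one that needs actual argument beyond counting, is to verify that the octahedron is the \emph{unique} $4$-regular triangulated planar graph on six vertices. I would argue this directly: in a $4$-regular graph on six vertices every vertex has exactly one non-neighbour, so the non-edges form a perfect matching, meaning the graph is $K_6$ minus a perfect matching, which is precisely the octahedron. Finally, one checks this graph is indeed planar and triangulated (it has $e = 12 = 3\cdot 6 - 6$ edges and all its faces are triangles under the standard planar embedding), completing the proof. The only mildly delicate point is the minimum-degree lower bound used above, which can be justified by noting that an edge from $v$ to $u$ lies on two triangular faces, giving $v$ at least two other neighbours besides $u$.
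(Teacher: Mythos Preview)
Your argument is correct and is exactly the standard derivation the paper has in mind: the paper gives no explicit proof of this corollary, merely stating it as a consequence of Euler's formula, and your edge count $e=3n-6$ together with the degree-sum bound is the intended reasoning. The uniqueness step via ``each vertex of a $4$-regular graph on six vertices has a unique non-neighbour, so the graph is $K_6$ minus a perfect matching'' is a clean way to finish. One cosmetic remark: your parenthetical about ruling out degree six is unnecessary, since on six vertices a simple graph already has maximum degree at most five.
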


For $\ell \geq 4$, we call the graph obtain from a cycle of length $\ell$ by adding a new vertex $v$ and making it adjacent to every other vertex an \emph{$\ell$-wheel}. We call $v$ the \emph{centre} of the $\ell$-wheel. The cycle that does not contain $v$ is called the $C_\ell$ of the $\ell$-wheel.
We will show that for every vertex $v$ of degree $\ell \geq 5$ in a graph that is known to be triangulated, 4-connected, and planar the graph induced by $v$ and its neighbours is an $\ell$-wheel. For brevity, we will denote the set $N(v) \cup \{v\}$ as $N[v]$.

 \begin{lemma}\label{lem:nbd-of-degree-6}
    Suppose $G$ is a triangulated planar graph. Then for every $v \in V(G)$ of degree $\ell \geq 5$, the graph induced on $N[v]$ contains an $\ell$-wheel with centre $v$. Moreover, if $G$ is $4$-connected then the graph induced by $N[v]$ is an $\ell$-wheel.
 \end{lemma}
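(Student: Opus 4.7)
The plan is to exploit the planar embedding of $G$ and the fact that every face is a triangle. I would first fix a planar embedding and consider the $\ell$ faces incident to $v$. Walking around $v$ in cyclic order, these faces are bounded by consecutive pairs of edges $vu_i, vu_{i+1}$, where $u_1,\dots,u_\ell$ are the neighbours of $v$ in the cyclic order induced by the embedding. Since every face is a triangle, the face between $vu_i$ and $vu_{i+1}$ must be the triangle $vu_iu_{i+1}$, which forces the edge $u_iu_{i+1}$ (indices mod $\ell$) to exist in $G$. Together with $v$, which is adjacent to all of $u_1,\dots,u_\ell$, the cycle $u_1u_2\cdots u_\ell u_1$ gives an $\ell$-wheel with centre $v$ contained in $G[N[v]]$. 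This proves the first part.

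For the \emph{moreover} statement, suppose toward contradiction that $G$ is $4$-connected but $G[N[v]]$ strictly contains the $\ell$-wheel found above. Then the link cycle has a chord: there exist neighbours $u_i, u_j$ of $v$ that are non-consecutive in the cyclic order but joined by an edge. The triangle with vertices $v, u_i, u_j$ is then a simple closed curve in the plane. By the Jordan curve theorem, it separates the plane into two regions; because $\ell\geq 5$ (in fact $\ell \geq 4$ would suffice) and $u_i, u_j$ are non-consecutive, each of the two arcs of the link cycle from $u_i$ to $u_j$ contains at least one other neighbour of $v$, and these neighbours lie in different regions. Any path in $G$ between two such neighbours must cross the triangle's boundary, which meets $G$ only at the vertices $\{v, u_i, u_j\}$ and their incident edges. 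Thus deleting $\{v, u_i, u_j\}$ disconnects $G$, contradicting $4$-connectivity. So no chord exists and $G[N[v]]$ is exactly the $\ell$-wheel.

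The main obstacle is making the planar-topology argument precise: one must justify that the cyclic order of neighbours coming from the embedding is compatible with the triangular faces (so consecutive neighbours really are adjacent), and that any chord genuinely yields a separating set of size three with vertices of $G$ strictly on both sides. Both are standard facts about planar triangulations, but it is worth writing out the Jordan-curve step explicitly to obtain a clean contradiction with $4$-connectivity, rather than invoking Observation~\ref{obs:4conn} in reverse.
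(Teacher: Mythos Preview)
Your argument is correct. The first part (consecutive neighbours in the rotation around $v$ are adjacent because each face is a triangle) is exactly what the paper does.

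For the \emph{moreover} part you take a different route from the paper. The paper argues as follows: assuming $G$ is $4$-connected, a chord $w_iw_j$ cannot make $\{v,w_i,w_j\}$ a separator, so there is a $w_{i+1}w_{j+1}$-path $Q$ in $G\setminus\{v,w_i,w_j\}$; then $G[N[v]]\cup Q$ contains a $K_5$ minor (branch sets $\{v\},\{w_i\},\{w_j\}$ and the two rim arcs, with $Q$ supplying the missing arc--arc connection), contradicting planarity. You instead use the Jordan curve theorem to show directly that the triangle $vu_iu_j$ separates the two rim arcs in the whole graph, contradicting $4$-connectivity. In effect the paper assumes $4$-connectivity and contradicts planarity, while you assume planarity (via the embedding) and contradict $4$-connectivity; the two arguments are essentially contrapositives of one another. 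Your version is a bit more self-contained because it avoids building the $K_5$ minor, at the cost of spelling out the Jordan-curve step (which you rightly flag as the place that needs care: one should note that the edges $vu_k$ for $k$ strictly between $i$ and $j$ cannot cross any of the three boundary edges, so the corresponding $u_k$ all lie on one side, and symmetrically for the other arc).
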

 \begin{proof}
    Let $v \in V(G)$ have degree at least five.
    Fix an embedding of $G$ and an orientation of the plane.
    Let $w_1, w_2, w_3, w_4, \dots, w_d$ be the neighbours of $v$ in cyclic order according to the embedding of $G[N(v)]$.
    Let $i \in \{1,2, \dots, d \}$.
    Then $w_i \dd v \dd w_{i+1}$ is a path bordering a face of $G$.
    Hence, there is some $w_iw_{i+1}$ path $P$ in $G \setminus v$ such that the union of $P$ and the path $w_i \dd v \dd w_{i+1}$ is a cycle bordering a face of $G$.
    Hence, since $G$ is triangulated $P$ must be a single edge.
    It follows that $w_1 \dd w_2 \dd w_3 \dd \dots \dd w_d \dd w_1$ is a cycle of $G$.
    This proves the first claim.

    Suppose $G$ is $4$-connected. 
     Suppose $w_iw_j$ is an edge for some non-consecutive $i,j \in \{1,2, \dots, d\}$ satisfying $i < j < d$.
    Since $G$ has no separating triangles there is a $w_{i+1}w_{j+1}$-path $Q$ in $G \setminus \{v,w_i, w_j\}$. But then $G[N[v]] \cup Q$ has a $K_5$ minor, a contradiction.
    Hence, $G[N[v]]$ is an $\ell$-wheel.
 \qed \end{proof}

%The $\ell$-fan is the graph obtained from a path of length $\ell$ by adding a new vertex adjacent to every other vertex.

\begin{observation}
    \label{obs:P5C5}
   For each $\ell \geq 5$, $C_\ell$ can be reconstructed.
\end{observation}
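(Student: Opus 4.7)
My plan is to argue by a short counting argument over pairs of vertices. For $\ell \geq 5$, the cycle $C_\ell$ is triangle-free, so a set of three vertices induces a connected subgraph if and only if it induces a path of length two; in a cycle this forces the three vertices to be three consecutive vertices along $C_\ell$. Hence the set of connected triples consists of exactly $\ell$ sets, one for each triple of consecutive vertices on the cycle, and this structural fact is the only thing one needs to read off from $T$.

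For each unordered pair $\{u,w\}$ of distinct vertices I would count the number of connected triples containing both. This count is $2$ when $u$ and $w$ are adjacent on the cycle (one triple for the vertex on each side of the edge $uw$), $1$ when they are at cyclic distance exactly $2$ (namely the unique triple containing them together with their common neighbour), and $0$ when they are at cyclic distance at least $3$. The zero case only arises for $\ell \geq 6$; when $\ell = 5$ every distinct pair is at cyclic distance at most two, but the separation between multiplicities $2$ and $1$ still identifies the edges. The reconstruction is then immediate: I would declare the edges of the reconstructed $C_\ell$ to be precisely those pairs appearing in exactly two connected triples, which recovers $C_\ell$ as a labelled graph.

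As an alternative (and much shorter) route, I would point out that since $C_\ell$ with $\ell \geq 5$ is triangle-free and has at least five vertices, Theorem \ref{thm:triangle-free} applies directly and already yields reconstruction in expected linear time, so the observation is in fact a special case of an earlier theorem. I do not foresee a genuine obstacle: the counting is elementary, and the only point that requires a brief sentence of care is the $\ell = 5$ boundary, where the distance-$3$ case is vacuous but the multiplicities $2$ versus $1$ still suffice to separate adjacent from non-adjacent pairs.
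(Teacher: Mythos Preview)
Your proof is correct but takes a genuinely different route from the paper. The paper argues by direct edge-forcing: writing $X=\{v_iv_{i+1}v_{i+2}:i\in\mathbb{Z}/\ell\mathbb{Z}\}$ for the triple set, it supposes for contradiction that some consecutive pair $v_1v_2$ is \emph{not} an edge; then from the triples $v_0v_1v_2$ and $v_1v_2v_3$ it deduces that $v_0$ and $v_3$ are each adjacent to both $v_1$ and $v_2$, so $v_0$--$v_2$--$v_3$ is a connected path, contradicting $v_0v_2v_3\notin X$. By symmetry every consecutive pair is an edge, and the paper's claim is that this (together with $X$) pins down the labelled cycle unconditionally, i.e.\ for \emph{any} graph whose triple set equals $X$.

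Your multiplicity count is a different and arguably more transparent mechanism: you read the edges off directly as the pairs occurring in exactly two elements of $X$. This cleanly recovers $C_\ell$ from its own triple set, but as written it does not by itself exclude a (possibly non-triangle-free) graph $G'$ sharing the same triple set; you have only verified that the rule is correct for $C_\ell$, not that the rule coincides with the edge relation of every $G'$ with triple set $X$. Your alternative via Theorem~\ref{thm:triangle-free} closes this within the triangle-free class, which is all the downstream uses (Observation~\ref{lem:apex} and Corollary~\ref{corr:degree-five-vertex}) actually require, since there one already knows the relevant induced subgraph is a cycle. If you want to match the paper's unconditional formulation, a one-line addition suffices: once the $\ell$ cycle edges are present, any extra chord $v_iv_j$ together with an incident cycle edge yields a connected triple outside $X$, so there are no chords and $G=C_\ell$.
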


The result is trivial, but we include a proof for completeness.

\begin{proof}
Let $\ell \geq 5$ be an integer. We claim that $G$ is a $C_\ell$ with vertices $v_0, v_1, \dots, v_{\ell-1}$, in cyclic order if and only if the set of triples of $G$ is equal to ${X = \{v_iv_{i+1}v_{i+2} \ |}$ $ \ i= 1,2,\dots, \ell \}$ where indices are taken modulo $\ell$.

The \say{if} statement is trivial. 
Suppose $G$ is a graph on $\ell$ vertices $v_0, v_1, \dots, v_{\ell-1}$ and the $X$ is the set of triples of $G$.
Suppose $v_1v_2$ is not an edge of $G$. Then $v_0, v_3$ must be adjacent to both $v_1$ and $v_2$ since $v_0v_1v_2$ and $v_1v_2v_3$ are both triples of $G$. But then $v_0 \dd v_2 \dd v_3$ is a subgraph of $G$, a contradiction since $v_0v_2v_3 \not \in X$.
\qed \end{proof}

\begin{observation}\label{lem:apex}
A graph $G$ is an $\ell$-wheel if and only if there is some vertex $v$ that appears in $\binom{\ell}{2}$ triples and $G \setminus v$ is a $C_{\ell}$.
\end{observation}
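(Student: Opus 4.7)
The plan is to prove the two directions of the biconditional directly, since both are short. For the forward direction, suppose $G$ is an $\ell$-wheel with centre $v$, so $v$ is adjacent to all $\ell$ vertices of the rim cycle. Then every pair $\{u,w\}\subseteq V(G)\setminus\{v\}$ gives an induced subgraph $G[\{v,u,w\}]$ that is connected (the edges $vu$ and $vw$ both exist), and these are precisely the triples containing $v$. Hence $v$ is contained in exactly $\binom{\ell}{2}$ triples, and $G\setminus v$ is a $C_\ell$ by the definition of an $\ell$-wheel.

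For the converse, suppose $v$ is contained in $\binom{\ell}{2}$ triples and $G\setminus v\cong C_\ell$. Since $|V(G)\setminus\{v\}|=\ell$, there are exactly $\binom{\ell}{2}$ pairs of vertices with which $v$ could co-occur in a triple, so the hypothesis forces every such triple $\{v,u,w\}$ to be connected. The key step is to deduce that $v$ is adjacent to every other vertex. First I would note that for $\{v,u,w\}$ to be connected, $v$ must be adjacent to at least one of $u,w$, since otherwise the only possible edge in the induced subgraph is $uw$, leaving $v$ isolated. Now assume for contradiction that some $u_0\in V(G)\setminus\{v\}$ is non-adjacent to $v$. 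Then for every $w\in V(G)\setminus\{v,u_0\}$, the triple $\{v,u_0,w\}$ is connected, so by the preceding observation $vw\in E(G)$, and to obtain the required second edge we must have $u_0w\in E(G)$ as well. Consequently $u_0$ is adjacent to all $\ell-1$ vertices of $V(G)\setminus\{v,u_0\}$, but $u_0$ has degree exactly $2$ in $C_\ell = G\setminus v$, which contradicts $\ell\geq 4$.

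Therefore $v$ is adjacent to every vertex of $V(G)\setminus\{v\}$, and together with $G\setminus v\cong C_\ell$ this exhibits $G$ as an $\ell$-wheel with centre $v$. There is no real obstacle; the only subtle point worth flagging in the write-up is where the assumption $\ell\geq 4$ (built into the definition of an $\ell$-wheel used in the paper) is invoked, namely to derive the contradiction $\ell-1\leq 2$.
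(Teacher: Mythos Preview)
Your proof is correct. The paper states this result as an observation and does not supply a proof; your argument fills in exactly the details one would expect, with the contradiction $\ell-1>2$ being the right place to invoke $\ell\geq 4$.
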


% \begin{lemma}\label{lem:apex}
%     % We claim that $G$ is a $\ell$-wheel ($\ell$-fan) if and only if there is some vertex $v$ that appears in $\binom{\ell}{2}$ triples and $G \setminus v$ is a $C_{\ell}$ (resp. $P_{\ell}$).
%     We claim that $G$ is a $\ell$-wheel ($\ell$-fan) if and only if there is some vertex $v$ that appears in $\binom{\ell}{2}$ triples and $G \setminus v$ is a $C_{\ell}$ (resp. $P_{\ell}$).
% \end{lemma}
% \begin{proof}
%     Let $G$ be a graph on $\ell$ vertices.
%     Clearly, the lemma holds if $G$ is a $m$-wheel or if $G$ is a $m$-fan.
%     Suppose some $v \in V(G)$ appears in $\binom{\ell }{2}$ triples and $G \setminus v$ is a $C_{\ell}$ or $P_{\ell}$. Suppose $v$ is not adjacent to some $w \in G \setminus v$. Then since $vww'$ is a triple for each $w' \in G \setminus \{v, w\}$ it follows that $w$ has degree four in $G \setminus \{v\}$. But this is a contradiction since every vertex of $G \setminus v$ is an $\ell$-wheel or $\ell$-fan.
%     Hence $G$ is a $\ell$-fan if $G\setminus v$ is a $P_{\ell}$ and $G$ is a $\ell$-wheel if $G \setminus v$ is a $C_{\ell}$.    
% \qed \end{proof}
 By combining \cref{obs:P5C5} and \cref{lem:apex}, we obtain for each $\ell \geq 5$, we can reconstruct the $\ell$-wheel when we know the graph in question is triangulated, planar and $4$-connected. 
It follows from, \cref{obs:P5C5}, \cref{lem:apex} and \cref{lem:nbd-of-degree-6} that: 
\begin{corollary}
\label{corr:degree-five-vertex}
    If $G$ is known to be a triangulated planar graph without any separating triangles, for each $v \in G$ of degree at least five we can reconstruct the graph induced by $v$ and its neighbours.
\end{corollary}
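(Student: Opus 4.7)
The plan is to combine the three results cited. Since $G$ is triangulated planar without separating triangles, \cref{obs:4conn} implies that $G$ is $4$-connected: any $3$-separator would have to induce a separating triangle, which is excluded by assumption. Hence \cref{lem:nbd-of-degree-6} applies and guarantees that $G[N[v]]$ is an $\ell$-wheel with centre $v$, where $\ell = \deg(v) \geq 5$.

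To reconstruct $G[N[v]]$ from the triple set $T$, I would search over subsets $S \subseteq V(G)$ containing $v$ with $|S| \geq 6$ and test whether the triples of $T$ that lie inside $S$ match the triples of an $(|S|-1)$-wheel centred at $v$. By \cref{obs:P5C5} and \cref{lem:apex}, this match reduces to two easily checked conditions: (i) $v$ lies in exactly $\binom{|S|-1}{2}$ of the restricted triples, and (ii) the restricted triples avoiding $v$ form the triple set of some cycle $C_{|S|-1}$ on $S \setminus \{v\}$.

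Existence of a valid $S$ is immediate from \cref{lem:nbd-of-degree-6}: the set $S = N[v]$ works. For uniqueness, the key observation is that the triples of $T$ contained in $S$ coincide exactly with the connected triples of the induced subgraph $G[S]$. Matching the triples of an $\ell$-wheel therefore forces, by \cref{obs:P5C5} and \cref{lem:apex}, that $G[S]$ itself is an $\ell$-wheel centred at $v$. In particular, $v$ must be adjacent to every vertex in $S \setminus \{v\}$, so $S \setminus \{v\} \subseteq N(v)$, and since $|S \setminus \{v\}| = \ell = \deg(v)$ we conclude $S = N[v]$.

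Once $N[v]$ has been determined, the structure of $G[N[v]]$ follows immediately: the edges from $v$ to all of $N(v)$ are present, and the cyclic ordering of $N(v)$ (which determines the remaining rim edges of the wheel) is recovered via \cref{obs:P5C5} from the triples of $G[N(v)] = C_\ell$. The argument is essentially bookkeeping; the only mild subtlety is the uniqueness step, which hinges on the fact that the restricted triple set agrees with the triple set of the induced subgraph $G[S]$, so that a purely triple-theoretic wheel match upgrades to an honest wheel in $G$.
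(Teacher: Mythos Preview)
Your approach is the one the paper has in mind, and your key observation---that the triples of $T$ lying entirely inside a set $S$ are exactly the connected triples of the induced subgraph $G[S]$---is correct and is precisely what makes \cref{obs:P5C5} and \cref{lem:apex} applicable here. There is, however, a gap in the uniqueness step.

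You correctly argue that if the triples inside some $S\ni v$ match those of an $(|S|-1)$-wheel centred at $v$, then $G[S]$ \emph{is} such a wheel, whence $S\setminus\{v\}\subseteq N(v)$. But the sentence ``since $|S\setminus\{v\}|=\ell=\deg(v)$'' is unjustified: you defined $\ell=\deg(v)$ in the first paragraph, while the wheel your search detects has rim size $|S|-1$, and nothing you have written forces these two quantities to coincide. A priori a proper subset $S\subsetneq N[v]$ with $|S|\geq 6$ might also pass the test. The missing line is this: by the ``moreover'' clause of \cref{lem:nbd-of-degree-6}, $G[N(v)]$ is exactly $C_{\deg(v)}$ with no chords, and since $G[S\setminus\{v\}]$ must be a cycle (the rim of the detected wheel) while the only induced subgraph of $C_{\deg(v)}$ that is itself a cycle is the whole thing, we conclude $S\setminus\{v\}=N(v)$. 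With this one extra sentence your argument is complete and matches the paper's intended proof.
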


\begin{lemma}\label{lem:reconstruct-4-conn-case-planar}
    Every connected graph $G$ on at least seven vertices that is known to be planar, triangulated and $4$-connected can be reconstructed. 
\end{lemma}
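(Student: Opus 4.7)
The plan is to first use Corollary~\ref{corr:degree-five-vertex} to reconstruct the wheel-neighbourhood of every vertex of degree at least $5$, and then combine this information with carefully chosen triples to determine every edge of $G$.

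Since $G$ is 4-connected on at least $7$ vertices, Observation~\ref{obs:4conn} implies that $G$ has no separating triangles, and Corollary~\ref{corr:euler-degree5} implies that the set $S := \{v \in V(G) : \deg(v) \geq 5\}$ is non-empty. By Corollary~\ref{corr:degree-five-vertex}, we can therefore reconstruct the $\deg(x)$-wheel $G[N[x]]$ (and in particular identify $N[x]$) for every $x \in S$.

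For every pair of distinct vertices $u, w$, I would decide whether $uw \in E(G)$ by the following case analysis: (1)~if both $u, w \in N[x]$ for some $x \in S$, read the answer off $G[N[x]]$; (2)~else, if some $x \in S$ has exactly one of $u, w$ in $N(x)$ (say $u \in N(x)$ and $w \notin N[x]$), then the triple $\{x, u, w\}$ contains the edge $xu$ but not $xw$, so $\{x, u, w\} \in T$ if and only if $uw \in E(G)$; (3)~otherwise declare $uw \notin E(G)$.

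The correctness of case~(3) rests on the following key lemma, which I expect to be the main obstacle: \emph{in any 4-connected triangulated planar graph on at least $7$ vertices, every edge $uw$ has at least one endpoint in $N[x]$ for some $x \in S$}. Assume for contradiction that $uw \in E(G)$ but every vertex in $N[u] \cup N[w]$ has degree exactly~$4$. Then $\deg(u) = 4$ and, since a chord in $G[N(u)]$ would create a separating triangle (ruled out by 4-connectivity), $N(u) = \{w = a_1, a_2, a_3, a_4\}$ induces a $4$-cycle. The face opposite $u$ across each edge $a_i a_{i+1}$ is a triangle whose third vertex is a common neighbour of $a_i$ and $a_{i+1}$ distinct from $u$; since each $a_j$ has a unique fourth neighbour outside $\{u, a_{j-1}, a_{j+1}\}$, all of these common neighbours coincide into a single vertex $y$ adjacent to every $a_i$. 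If $\deg(y) \geq 5$ then $y \in S$ and $w = a_1 \in N(y)$, contradicting the assumption that $w \notin N[x]$ for all $x \in S$. If $\deg(y) = 4$ then $N(y) = \{a_1, \ldots, a_4\}$, so no vertex in $\{u, y, a_1, \ldots, a_4\}$ has a neighbour outside this $6$-element set; since $|V(G)| \geq 7$ and $G$ is connected, this is impossible. Once the key lemma is proved, the three cases together reconstruct every edge of $G$.
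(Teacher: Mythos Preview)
Your proof is correct but takes a genuinely different route from the paper's. The paper fixes a single vertex $v$ of degree at least five, reconstructs $G[N[v]]$, and then grows the known portion outward in BFS layers $B_1(v),B_2(v),\dots$; its key structural step (the second claim in the paper's proof) shows that whenever two vertices $u,z$ at distance $\ell$ are adjacent and share a neighbour $x$ at distance $\ell-1$, that $x$ must have degree at least five, so the wheel at $x$ reveals the edge $uz$. You instead reconstruct \emph{all} wheels $G[N[x]]$ for $x\in S$ upfront and then handle every pair $u,w$ uniformly via your three cases; your key lemma (no edge of $G$ has both endpoints outside $\bigcup_{x\in S}N[x]$) is the symmetric, layer-free analogue of the paper's claim. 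Both arguments ultimately rest on the same obstruction---forcing $\{u,y,a_1,\dots,a_4\}$ to be an octahedron component---but you isolate it as a clean stand-alone statement about $4$-connected triangulations, whereas the paper's version is entangled with the BFS structure. Your argument is shorter and more transparent; the paper's is more explicitly algorithmic and only invokes wheel reconstruction on demand rather than for every high-degree vertex.
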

\begin{proof}
    It is an easy consequence of Euler's formula (given as \cref{corr:degree-five-vertex}), that $G$ must have a vertex $v$ of degree at least five.
    By \cref{corr:degree-five-vertex}, we can reconstruct the graph induced by $N[v]$ since $G$ is $4$-connected.
    We say the distance between two vertices $x,y$ is the number $d(x,y)$ of edges on a shortest path between them.
    Let $B_{m}(v)$ denote the graph induced by the set of vertices of distance at most $m$ from $v$.
    Suppose for some $\ell \geq 2$, we can reconstruct $B_{\ell-1}(v)$ from the connected triples of $G$.
    \begin{claim}
    We can determine the set of vertices $w \in N(x)$ that are of distance $\ell$ from $v$ for every vertex $x$ satisfying $d(v,x) = \ell-1$.\label{stmt:nbrlist}
    \end{claim}
    Since $\ell \geq 2$, there is an edge between $x$ and some vertex $y$ satisfying $d(v, y) = \ell -2$.
    Then, $q$ is a neighbour of $x$ of distance $\ell$ from $v$ if and only if $q \not \in V(B_{\ell -1}(v))$ and $qxy$ is a connected triple.
    Thus, we reconstruct the list of neighbours of $x$ of distance $\ell$ from $v$. This proves the claim.
    \\
    \\
    %Let $u$ be a neighbor of $x$ of distance $\ell$.
    For a vertex $b \in G$ and integer $m$ let $N^{m}_v(b)$ denote the set of neighbours of $b$ of distance at most $m$ from $v$.
    Then by the previous claim we can reconstruct the set $N^{\ell -1}_v(w)$ for each $w \in G$. Moreover, we can determine the set of vertices of distance $\ell$ from $v$. 
    Let $u$ be a vertex of distance $\ell$ from $v$.
    We complete the proof by showing we can reconstruct the set $N^\ell_v(u)$.
    Let $z \neq u$ be a vertex of distance $\ell$ from $v$.

    \begin{claim}
        Let $z$ be a vertex of distance $\ell$ from $v$.
        Suppose that $z$ and $u$ are adjacent. Then, if $z$ and $u$ have a common neighbour $x$ satisfying $d(v,x) = \ell -1$ it follows that $x$ must have degree at least five. Moreover, the edge $zu$ can be reconstructed.
    \end{claim}

    Since $\ell \geq 2$, by definition $x$ must have a neighbour $x'$ of distance $\ell -2$ from $v$. Then $z,u$ are both not adjacent to $x'$.
    Since $G$ is triangulated, it follows that $x', x, u$ do not all occur together on a cycle bounding a face and $x', x, z$ do not all occur together on a cycle bounding a face.
    Because $G$ is planar there is a single face $C$ of $G \setminus x$ that contains all neighbours of $x$.
    Let $P_u$ be the $x'u$-path of $C$ that does not contain $z$ and let $P_z$ be the $x'z$-path of $C$ that does not contain $u$.
    Then, since $G$ is triangulated $x$ must have a neighbour in the interior of both $P_u$ and $P_z$.
    Hence, $x$ has degree at least five.
    So by \cref{corr:degree-five-vertex}, $G[N[x]]$ and thus the edge $uz$ can be reconstructed. 
    This proves the claim.
    \\
    \\
    We complete the proof by showing that we can determine the edges from $u$ to vertices $z$ satisfying $d(v,z) = \ell$ and $N^{\ell-1}_v(z) \cap N^{\ell -1}_v(u) = \emptyset$.
    Let $x$ denote a neighbour of $u$ that is of distance $\ell-1$ from $v$.
    Then $z$ and $u$ are adjacent if and only if $xuz$ is a triple.
\qed \end{proof}

We can also recognise \say{small enough} separators of a graph. We only need this result for separators of size three on graphs with at least six vertices.

\begin{observation}
    \label{lem:recognising-separators}
    Let $G$ be a connected graph on $n$ vertices and let $S \subseteq V(G)$ have cardinality at most $n -3$.
    Then $S$ is a separator if and only if it is possible to partition $V(G) \setminus S$ into two non-empty parts $V_1, V_2$  such that any triple containing a vertex from $V_1$ and a vertex from $V_2$ must contain a vertex from $S$. 
\end{observation}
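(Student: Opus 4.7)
The plan is to prove the biconditional in the natural way, by treating the two directions separately. I read the word \emph{triple} in the statement as referring to a \emph{connected} triple, i.e.\ an element of $T$ (otherwise the condition is vacuously false as soon as $|V_1\cup V_2|\ge 3$ and one could take two vertices from one part and one from the other). The forward direction is essentially the definition of a separator, and the reverse direction is a contrapositive that uses the size bound $|S|\le n-3$.

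For the forward direction, assume $S$ is a separator. I would set $V_1$ to be one connected component of $G\setminus S$ and $V_2 := V(G)\setminus (S\cup V_1)$; both are nonempty. The key point is that $G$ has \emph{no} edges between $V_1$ and $V_2$, since any such edge would lie in $G\setminus S$ and contradict the fact that $V_1$ is a connected component of $G\setminus S$. Now take any connected triple $\{a,b,c\}$ with $a\in V_1$ and $b\in V_2$. If $c\notin S$, then $c\in V_1\cup V_2$; but whichever side $c$ is on, the vertex of $\{a,b\}$ on the opposite side has no incident edge inside $\{a,b,c\}$ and becomes an isolated vertex of $G[\{a,b,c\}]$, contradicting connectedness. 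Hence $c\in S$, and the partition has the required property.

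For the reverse direction, I would argue the contrapositive: assume $G\setminus S$ is connected and show that no partition with the stated property can exist. Fix any partition $V_1,V_2$ of $V(G)\setminus S$ into two nonempty parts. Connectedness of $G\setminus S$ gives a path from $V_1$ to $V_2$ entirely in $V(G)\setminus S$, and the first edge along this path that crosses the partition is an edge $uv$ of $G$ with $u\in V_1$, $v\in V_2$ and $\{u,v\}\cap S=\emptyset$. Since $|S|\le n-3$ we have $|V(G)\setminus S|\ge 3$, and because $G\setminus S$ is connected on at least three vertices, $u$ and $v$ cannot both have degree $1$ in $G\setminus S$. Hence one of them (say $u$) has a neighbour $c\in V(G)\setminus S$ with $c\ne v$. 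Then $\{u,v,c\}$ is a connected triple (contains the edges $uv$ and $uc$) that meets both $V_1$ and $V_2$ but avoids $S$, contradicting the hypothesised property of the partition.

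The main, and only mildly delicate, step is producing the witness vertex $c$ in the reverse direction. A priori $u$ and $v$ could both have degree~$1$ in $G\setminus S$, in which case $\{u,v\}$ would itself be a connected component of $G\setminus S$; this is ruled out precisely because $G\setminus S$ is connected and has at least three vertices, which is exactly where the hypothesis $|S|\le n-3$ is used. Everything else is a direct unpacking of the definition of a separator together with the definition of a connected induced subgraph on three vertices.
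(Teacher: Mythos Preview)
Your proof is correct. The paper states this result as an observation without proof, and your argument is exactly the straightforward unpacking one would expect: the forward direction splits off a component, and the reverse direction uses $|S|\le n-3$ to guarantee a third vertex $c\in V(G)\setminus S$ adjacent to one endpoint of a crossing edge, producing a connected triple avoiding $S$. There is nothing to compare against in the paper itself.
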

%\linda{I don't think we need to include this proof.}
% \begin{proof}
%     If \(S\) is a separator the observation is trivial. 
%     Conversely, if \(S\) is not a separator, the remainder of the graph must form one connected component.
%     If there are at least three remaining vertices, this means that, for any choice of \(V_1\) and \(V_2\), there must be a vertex \(x \in V_1\) with a neighbour \(y \in V_2\), and a third vertex \(z\) that neighbours \(x\) and/or \(y\).
%     We then have a triple containing vertices from \(V_1\) and \(V_2\) but not containing any element of \(S\).
% \qed \end{proof}

Using the fact that we can detect separators of size three, we will show that all triangulated planar graphs of at least seven vertices have a unique reconstruction by performing recursion along separating triangles.
We will need the following observation.

\begin{observation}\label{obs:defining-the-recursion}
Let $G$ be a triangulated plane graph and let $S$ be a separator of size three in $G$.
Then $G \setminus S$ has exactly two components $V_1, V_2$ and $G[S\cup V_1]$ and $G[S \cup V_2]$ are both triangulated planar graphs. 
\end{observation}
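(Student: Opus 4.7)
The plan is to work with a planar embedding of $G$. First, by Observation \ref{obs:4conn} (applicable since the existence of a separator of size three in a planar triangulation forces $|V(G)| \geq 5$; in $K_4$ no $3$-set separates), the set $S$ induces a triangle in $G$. Fix a plane embedding; the triangle $S$ is then a simple closed curve, which by the Jordan curve theorem partitions $\mathbb{R}^2 \setminus S$ into a bounded region $R_1$ and an unbounded region $R_2$. Let $V_i$ be the set of vertices of $V(G)\setminus S$ lying in $R_i$. Since $S$ is a separator of $G$, both $V_1$ and $V_2$ are nonempty.

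Second, planarity forbids any edge of $G$ from crossing the closed curve $S$, so there are no edges of $G$ between $V_1$ and $V_2$; in particular, every connected component of $G\setminus S$ is entirely contained in $V_1$ or in $V_2$. Moreover, with its inherited plane embedding, the subgraph $G[S\cup V_i]$ has outer face bounded exactly by the triangle $S$ (the vertices of $V_{3-i}$ and the faces of $G$ in $R_{3-i}$ have all been removed from that side), while every other face coincides with a face of $G$ lying in $R_i$ and is therefore a triangle. This yields the second part of the statement: $G[S\cup V_i]$ is a triangulated plane graph for each $i\in\{1,2\}$.

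The main remaining task, and the main subtlety, is to show that each $V_i$ induces a single connected subgraph, so that $G\setminus S$ has exactly two components. For this I would invoke the standard fact that removing the three outer-face vertices of a plane triangulation on at least four vertices leaves a connected graph. Concretely, take any $uv$-path in $G[S\cup V_i]$ with $u,v\in V_i$, and consider any subpath of the form $w_1\dd a\dd w_2$ with $a\in S$ and $w_1,w_2\in V_i$. Because every face of $G[S\cup V_i]$ around $a$ inside the triangle $S$ is a triangle, the neighbours of $a$ lying in $V_i$ form a path in $G[V_i]$ joining the two other vertices of $S$; this path contains both $w_1$ and $w_2$, so it provides a $w_1w_2$-replacement that avoids $S$. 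Iterating this rerouting over each occurrence of a vertex of $S$ on the original path yields a $uv$-walk lying entirely in $V_i$, proving $G[V_i]$ connected and completing the argument.
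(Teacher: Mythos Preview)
Your approach is sound and gets there, but it is quite different from the paper's one-line justification. The paper argues purely combinatorially: since a planar triangulation on at least five vertices is $3$-connected, every component of $G\setminus S$ must send an edge to each of the three vertices of $S$; if there were three or more components, contracting each to a point would produce a $K_{3,3}$ minor, contradicting planarity. That immediately gives at most two components, and ``at least two'' is just the definition of separator. Your Jordan-curve picture has the advantage of directly exhibiting the two sides and their inherited triangulated embeddings, whereas the paper's argument is shorter and avoids any geometric rerouting.

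There is one genuine gap in your rerouting step. You only treat subpaths of the form $w_1\dd a\dd w_2$ with $a\in S$ and $w_1,w_2\in V_i$, but a $uv$-path in $G[S\cup V_i]$ may visit two or three vertices of $S$ consecutively, e.g.\ $w_1\dd a\dd b\dd w_2$, and then your replacement rule does not apply as stated. The fix is exactly the structure you already identified: for each $a\in S$ the neighbours of $a$ in $V_i$ form a path $P_a$ whose two endpoints are adjacent to the other two vertices of $S$ (not equal to them, as your phrase ``joining the two other vertices of $S$'' suggests). In particular $P_a$ shares an endpoint with $P_b$ and with $P_c$, so $P_a\cup P_b\cup P_c$ is connected in $G[V_i]$, and any maximal $S$-segment of the original path can be rerouted through these $P_a,P_b,P_c$. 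With that adjustment the argument goes through.
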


\cref{obs:defining-the-recursion} follows from the fact that if $G$ is a triangulated planar graph every component of $G \setminus S$ would need to have an edge to each vertex of $S$.
Since $S$ is the vertex set of a triangle by \cref{obs:4conn} and planar graphs do not have $K_{3,3}$ subgraphs it follows that there can be at most two components of $G \setminus S$.
We can use \cref{obs:defining-the-recursion} to perform recursion on $G[(V(H) \cup V_1]$ and $G[V(H) \cup V_2]$ until we are left with either a triangulated planar graph with no separating triangles with at least seven vertices (which we can handle by \cref{lem:reconstruct-4-conn-case-planar}) or a triangulated planar graph on at most six vertices.
Next, we will show that if we end up with a triangulated planar graph on at most six vertices from this procedure, we are able to reconstruct it.
We begin by describing the triangulated planar graphs on at most six vertices.

For $\ell \geq 3$, we call a graph obtained from an $\ell$-wheel by adding as many chords as possible to the $C_\ell$ while keeping the graph planar a \emph{triangulated $\ell$-wheel}.
Note when $\ell \leq 5$, this is uniquely defined, up to labelling (see \cref{fig:6-vertex-triangulated-cases}).
Moreover, any triangulated planar graph on $n$ vertices containing a vertex of degree $n-1$ must be a triangulated $(n-1)$-wheel.

\begin{figure}[t]
    \centering
    \includegraphics[width=.7\textwidth]{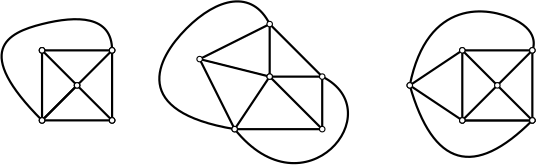}
    \caption{A depiction of all the distinct (unlabelled) triangulated planar graphs on five or six vertices. From left to right, the graphs are the triangulated 4-wheel, triangulated 5-wheel, and the octahedron.}
    %LC: If you want you can replace this figure with one more similar to the style of the other figures in this paper.
    %I chose this figure instead of the previous version of 6-vertex-triangulated-cases figure because the previous version was a bit confusing to me since it depicted 4 isomorphic graphs with really different planar embeddings and I think it is easier just to have one drawing and point out there are different cases depending on where the seperating triangle was
    \label{fig:6-vertex-triangulated-cases}
\end{figure}

\begin{observation}\label{obs:cases}
Every triangulated planar graph on at most six vertices is either a $K_3$, a $K_4$, a triangulated $4$-wheel, an octahedron or a triangulated $5$-wheel. 
\end{observation}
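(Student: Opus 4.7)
The plan is to combine Euler's formula (Theorem \ref{thm:euler}) with the constraint $2e = 3f$ coming from every face being a triangle to pin down the edge count, and then to argue case-by-case on $n$.

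For $n = 3$ the only triangulation is $K_3$. For $n = 4$ and $n = 5$ the two identities yield $e = 6$ and $e = 9$ respectively, so $G = K_4$ when $n = 4$, and when $n = 5$ the graph $G$ saturates the planar bound $3n - 6$ and hence equals $K_5$ minus an edge; a direct check (taking any one of the three degree-$4$ vertices as the centre of the wheel and the two degree-$3$ vertices as non-adjacent corners of its four-cycle of neighbours) shows this is exactly the triangulated $4$-wheel.

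The substantive case is $n = 6$, where Euler forces $e = 12$ and degree sum $24$. Every triangulated planar graph on at least four vertices has minimum degree at least $3$ (a degree-$2$ vertex would be confined between two triangular faces sharing both its endpoints, forcing $n \leq 3$), so the degree sequence takes values in $\{3,4,5\}$. I split on whether a vertex of degree $5$ exists. If some vertex $v$ has degree $5$, then $N[v] = V(G)$, so by Lemma \ref{lem:nbd-of-degree-6} the graph $G$ contains a spanning $5$-wheel centred at $v$. That spanning subgraph already uses $10$ of the $12$ edges, so the two missing edges must be chords of the outer $C_5$; planarity forces them to be non-crossing, and any non-crossing pair of chords triangulates $C_5$, giving a triangulated $5$-wheel (unique up to the dihedral symmetry of $C_5$). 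If on the other hand every vertex has degree at most $4$, then $\delta(G)\geq 3$ together with degree sum $24$ forces $G$ to be $4$-regular. Picking any vertex $v$, the triangular faces incident to $v$ imply that consecutive neighbours of $v$ are adjacent, so the four neighbours of $v$ form a $C_4$; the sixth vertex $u$ (non-adjacent to $v$) must be adjacent to all four of them in order to reach degree $4$, since any diagonal of the $C_4$ would already bring its endpoints to degree $4$ and leave $u$ with degree at most $2$. This uniquely identifies $G$ as the octahedron.

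The main obstacle is the $n = 6$ analysis: in the degree-$5$ subcase one must justify that the two extra edges must form a non-crossing triangulation of $C_5$ and that all such triangulations yield isomorphic labelled structures up to symmetry, and in the degree-$4$ subcase one must carefully rule out diagonals of the four-cycle of neighbours and argue that the sixth vertex is forced to attach to every one of them.
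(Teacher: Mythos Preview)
Your argument is correct. The paper does not supply a dedicated proof of this observation; the intended justification is the pair of remarks immediately preceding it---namely \cref{corr:euler-degree5} (the octahedron is the unique triangulated planar graph on six vertices with no degree-five vertex) and the sentence ``any triangulated planar graph on $n$ vertices containing a vertex of degree $n-1$ must be a triangulated $(n-1)$-wheel''---which together with Euler's formula dispatch $n=3,4,5$ and both subcases of $n=6$ exactly as you do. Your write-up simply re-derives those two ingredients (the $4$-regular analysis reproves the second half of \cref{corr:euler-degree5}, and your degree-$5$ subcase via \cref{lem:nbd-of-degree-6} plus the non-crossing-chord count reproves the triangulated-wheel sentence), so the approach is essentially the same, just more self-contained.
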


We are now ready to prove the base case of our recursion.

\begin{lemma}\label{lem:small-planar}
    Let $G$ be a graph on at least seven vertices that is known to be triangulated and planar.
    Suppose $G$ has a separator $S$ of size three and let $V_1$ be a component of $G \setminus S$ of size at most three.
    Then $G[S \cup V_1]$ can be reconstructed using the triples of $G$ and the knowledge of the labels of $S, V_1$ and the fact that $S$ is a separator of $G$.
\end{lemma}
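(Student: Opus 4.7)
The plan is to observe that $G[S\cup V_1]$ is itself a triangulated planar graph on at most six vertices (Observation \ref{obs:defining-the-recursion}), so by Observation \ref{obs:cases} it must be one of $K_4$, the triangulated $4$-wheel, the octahedron, or the triangulated $5$-wheel. Since $G[S]$ is a triangle (Observation \ref{obs:4conn}), the reconstruction reduces to identifying the edges between $S$ and $V_1$ and the edges internal to $V_1$.

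To determine the $S$-to-$V_1$ edges, I will exploit vertices of $V_2 := V(G)\setminus(S\cup V_1)$, which is nonempty since $|V(G)| \geq 7$ and $|S\cup V_1| \leq 6$. Triangulated planar graphs on at least four vertices are $3$-connected, so $S$ is a minimal separator, and every $s \in S$ has at least one neighbor in $V_2$. For $s \in S$, $u \in V_1$, and $v \in V_2$, the pair $uv$ is a non-edge (they lie in different components of $G\setminus S$), so $suv$ is connected iff both $su$ and $sv$ are edges. Thus $su \in E(G)$ iff some $v \in V_2$ makes $suv$ a connected triple, a condition that can be read directly from the input. Combined with the fact that $G[V_1]$ is connected (forcing the unique pair in $V_1$ to be an edge when $|V_1|=2$), this already completes the reconstruction for $|V_1| \leq 2$.

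For $|V_1| = 3$, write $V_1 = \{x,y,z\}$; I additionally need the three potential edges inside $V_1$. For each pair $u,u' \in V_1$ I first look for some $s \in S$ adjacent to exactly one of $u,u'$ (checked using the $S$-to-$V_1$ adjacencies just computed): if such an $s$ exists, then the triple $uu's$ is connected iff $uu' \in E(G)$. An edge-count argument ($G[S\cup V_1]$ has $3n-6 = 12$ edges) combined with the minimality of $S$ rules out the possibility that all three vertices of $V_1$ share a common $S$-neighborhood, so at most one pair of $V_1$-vertices is ``ambiguous'' in the above sense. For that ambiguous pair $(u,u')$, the third vertex $u''$ then has a distinct $S$-neighborhood, so both $(u,u'')$ and $(u',u'')$ have been determined by the previous step, and I read off $uu'$ from the connectedness of the triple $uu'u''$ together with the known values of $uu''$ and $u'u''$.

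The main obstacle is to verify that this last triple is genuinely informative: if $uu''$ and $u'u''$ were both edges, then $uu'u''$ would be connected independently of $uu'$. I would resolve this by a finite case check against the labeled octahedra and triangulated $5$-wheels admitting a separating triangle $S$ with two $V_1$-vertices sharing their $S$-neighborhood; in each such configuration the degree sequence forces $uu'' + u'u'' = 1$, which is precisely the situation in which $uu'u''$ pins down $uu'$. This reduction to a finite check, rather than a uniform argument, is the delicate part; the fundamental reason a purely local argument cannot help is that any triple $uu'v$ with $v \in V_2$ is automatically disconnected (since $uv$ and $u'v$ are both non-edges), so vertices outside $S \cup V_1$ give no information about $V_1$-internal edges.
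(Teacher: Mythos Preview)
Your proof is correct and shares the paper's overall skeleton: determine the $S$--$V_1$ edges via a neighbour in $V_2$ (using that $uv$ is a non-edge for $u\in V_1$, $v\in V_2$), then recover the edges inside $V_1$. The treatments of $|V_1|\le 2$ are essentially identical.

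For $|V_1|=3$ the two arguments diverge in organisation. The paper first classifies $G'=G[S\cup V_1]$ by isomorphism type (octahedron versus triangulated $5$-wheel), then argues separately: in the octahedron $V_1$ is forced to be a triangle, and in the $5$-wheel the four possible degree patterns of $S$ are listed and the remaining edges read off from triples $sxy$ with $s\in S$ adjacent to exactly one of $x,y$. You instead run a uniform edge-by-edge test and isolate the single possibly ``ambiguous'' pair, falling back on the triple $uu'u''$ together with a finite check that $uu''+u'u''=1$. Both routes terminate in a small case analysis; yours is arranged by edge, the paper's by graph type. Your $K_{3,3}$/edge-count argument ruling out three equal $S$-neighbourhoods is a nice uniform substitute for the paper's explicit use of the classification.

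One wording issue to fix: in your case check you speak of $S$ as a ``separating triangle'' of the small graph $G'$, but in $G'=G[S\cup V_1]$ the triangle $S$ is a \emph{face}, not a separator (it separates only in the ambient graph $G$). Concretely, the octahedron has no separating triangles at all, and the two separating triangles of the triangulated $5$-wheel leave components of size $1$ and $2$, so taken literally your quantification would be vacuous. What you actually need (and what makes the check go through) is to range over \emph{face} triangles $S$ of $G'$ with the other three vertices forming $V_1$; with that correction your claim $uu''+u'u''=1$ holds in every ambiguous configuration.
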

\begin{proof}
    Let $G'$ denote the graph $G[S \cup V_1]$.
    Since $G \setminus V_1$ is non-empty and $G$ is a triangulated planar graph each vertex $s \in S$ must have a neighbour in $u_s \in G \setminus G'$.
    Then, we can determine the edges between $S$ and $V_1$ by checking if $su_sv$ is a triple for each $s \in S$ and $v \in V_1$.
    Hence, we need only show that we can reconstruct the edges between vertices of $V_1$.
    This is trivial when $G'$ has four vertices.
    When $G'$ has five vertices it must be 4-connected by \cref{obs:cases} so the two vertices in $V_1$ must be adjacent.
    
    When $G'$ has six vertices, it is either the octahedron or the triangulated 5-wheel.
    Every vertex in the octahedron has degree four and the triangulated 5-wheel has exactly two vertices of degree four.
    Thus, since we know the edges incident to $S$ in $G'$ we can determine whether $G'$ is an octahedron or a triangulated 5-wheel. Note that the graph obtained after removing the vertex set of any triangle in the octahedron is a triangle.
    Hence, $V_1$ must induce a triangle. So we can reconstruct $G'$ if it is an octahedron.

    We complete the proof by supposing $G'$ is a triangulated $5$-wheel and showing we can reconstruct $G'$. Note the triangulated $5$-wheel has exactly two vertices of degree 3, degree 4 and degree 5.
    Moreover, it has an isomorphism mapping the vertex of degree $i$ to the other vertex of degree $i$ for each $i \in \{3,4,5\}$.
    We say $S$ has degrees $(n_1,n_2,n_3)$ if the vertices of $S$ can be ordered so that the $i$-th vertex has degree $n_i$ in $G'$. %LC: rephrase to make less confusing?
    Because of the symmetry of $G'$, the unlabelled graph on $G' \setminus S$ is determined by the degrees of $S$.
    Since $S$ induces a triangle, there are only four possible degrees $S$ can have $(5, 5, 4)$, $(5,5,3)$, $(5,4,4)$, $(5,4,3)$.
    Moreover, since we already reconstructed the edges between $S$ and $G' \setminus S$, if there is some $x, y \in G' \setminus S$ such that $y$ has a neighbour $s \in S$ that is not adjacent to $x$ then we can test if $x$ and $y$ are adjacent by checking if $sxy$ is a triple.
    It is not difficult to see that these observations together are enough to reconstruct $G'$.
\qed \end{proof}

We are now ready to prove the main result of this subsection.
\repeattheorem{triangulated}
\begin{proof}
    Let $G$ be a triangulated planar graph on at least seven vertices.
    By \cref{lem:recognising-separators}, we can determine if $G$ has a separator of size $3$.
    If it doesn't, we can reconstruct $G$ by \cref{lem:reconstruct-4-conn-case-planar}. So, we may assume $G$ has a separator $S$ of size 3.
    Then $G \setminus S$ has exactly two components, and we can recognise their vertex sets $V_1, V_2$ by \cref{lem:recognising-separators} and \cref{obs:defining-the-recursion}.
    Moreover, by \cref{obs:defining-the-recursion} both $G[V_1 \cup S]$ and $G[V_2 \cup S]$ are planar and triangulated.
    For $i\in \{1,2\}$, if $G[V_i \cup S]$ contains at most six vertices we can reconstruct $G[V_i \cup S]$ by \cref{lem:small-planar}.
    Otherwise, we perform recursion on $G[V_i \cup S]$.
\qed \end{proof}

Note that, while our proof is constructive, it does not lead to a faster algorithm than the general cubic or \(O(n \cdot \norm{T})\) time algorithm described in \cref{sec:algorithm}.

\section{Proofs for reconstruction from connected $k$-sets}
\label{app2}
In this appendix we give the proofs about reconstruction from connected $k$-sets that were omitted from the text. We repeat the results for convenience of the reader.

\repeattheorem{treethreshold}
\begin{proof}
For the first claim, by monotonicity we may assume that $k = \lceil n/2 \rceil$. We first find the set $L$ of all leaves by checking for all $v \in V$ whether $G[V\setminus\{v\}]$ is connected. Note that since $G$ is a tree, this is only the case if $v$ is a leaf.

Next we check for all $v \in V$:
\begin{enumerate}
    \item for all $L' \subseteq L$ of size at least $\lceil n/2\rceil -1$ whether $G[L' \cup \{v\}]$ is connected, and
    \item for all $L' \subseteq L$ of size at most $\lceil n/2\rceil -2$ whether $G[V\setminus (L' \cup \{v\})]$ is connected.
\end{enumerate} 
The size constraints put on $L'$ allow us to perform the queries, as 
\[
n-(\lceil n/2\rceil -1) = n-\lfloor n/2\rfloor)= \lceil n/2\rceil.
\]
All $L'$ giving a positive response for the first question (whether $G[L' \cup \{v\}]$ is connected) are contained in $N(v)\cap L$. 
Indeed, they are contained in $L$ by assumption and a leaf is only part of a connected subgraph if its `parent' is also present.
So if $|N(v) \cap L| \geq  \lfloor n/2\rfloor-1$, then we recover $N(v)\cap L$ as the largest $L'\subseteq L$ such that $G[L' \cup \{v\}]$ is connected.

On the other hand, all $L'\subseteq L$  giving a positive response to the second question (whether $G[V\setminus (L' \cup \{v\})]$ is connected)
must contain $N(v)\cap L$, unless $v$ is the only non-leaf vertex (in which case we can immediately conclude $G$ is a star). Therefore, if $|N(v) \cap L| \leq \lceil n/2 \rceil - 2$ then we recover $N(v) \cap L$ as the smallest set $L'$ for which the response is positive.

Once we have recovered for each vertex what leaves it is incident to, we may continue to find the set of vertices adjacent to those in a similar manner: we see a vertex that has a leaf incident to it as a `second-round' leaf, and assign it as a weight the number of leaves incident to it plus 1. If we include a second-round leaf in a subset, then we include itself and its incident leaves, and otherwise the same argument applies to find `third-round' leaves, et cetera.

For the second claim let $P_n = ([n], \{\{i, i+1\} : i \in [n-1]\}$ be an $n$-vertex path. If $n$ is even, then any connected set of $n/2 + 1$ vertices must include the vertices $n/2$ and $n/2 + 1$. This implies that we cannot distinguish between the current path and a path where these two vertices are swapped. Similarly, if $n$ is odd, any connected set of $\lceil n/2 \rceil + 1$ vertices must contain the vertices $\lceil n/2 \rceil - 1$, $\lceil n/2 \rceil$ and $\lceil n/2 \rceil + 1$. This again implies that we cannot distinguish between these three vertices.
\qed \end{proof}
The result above can be used to show that for each $k$, infinitely many graphs have  `reconstructibility threshold' $k$ in the following sense.
\begin{proposition}
\label{prop:infmany}
For each value of $k\geq 2$, there exist infinitely many graphs that are determined by their connected $k$-sets but not by their connected $(k+1)$-sets
\end{proposition}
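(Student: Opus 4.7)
The plan is to use Theorem~\ref{treethreshold} as the base case and then enlarge the example by attaching vertices that preserve the crucial swap-invariance of the two middle vertices $v_k,v_{k+1}$ at the $(k+1)$-level. The base example is the path $P_{2k}$: by Theorem~\ref{treethreshold} it is determined by its connected $k$-sets but, since $\lceil 2k/2\rceil+1=k+1$, it is not determined by its connected $(k+1)$-sets. To obtain infinitely many graphs, I would fix $k\geq 2$ and, for each $m\geq 1$, define
\[
G_m \;=\; P_{2k} \,*\, \overline{K_m},
\]
i.e.\ the graph on vertex set $V(P_{2k})\cup\{s_1,\dots,s_m\}$ whose edges are the path edges of $P_{2k}$ together with every pair $\{s_i,v_j\}$ with $v_j\in V(P_{2k})$ (the $s_i$'s form an independent set). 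The graphs $G_1,G_2,\dots$ have strictly increasing orders, so they are pairwise non-isomorphic and produce infinitely many candidates.

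\textbf{Non-uniqueness at level $k+1$.} Let $G'_m$ be obtained from $G_m$ by swapping the labels $v_k\leftrightarrow v_{k+1}$. Since each $s_i$ is adjacent to both $v_k$ and $v_{k+1}$, the edge set incident to $\{s_1,\dots,s_m\}$ is fixed by the swap, while the path edges change exactly as in $P_{2k}$; hence $G'_m\neq G_m$ as labeled graphs. To see that $G_m$ and $G'_m$ have the same connected $(k+1)$-sets, split any $(k+1)$-subset $T$ into three cases: if $T\subseteq V(P_{2k})$, one invokes the proof of Theorem~\ref{treethreshold} (every connected $(k+1)$-arc in $P_{2k}$ contains both $v_k$ and $v_{k+1}$ or neither); if $T$ contains some $s_i$ and at least one path vertex, then $T$ is connected in both graphs through $s_i$; and if $T\subseteq\{s_1,\dots,s_m\}$ then $T$ is an independent set in both graphs and hence disconnected.

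\textbf{Uniqueness at level $k$.} Suppose $G^\ast$ has the same connected $k$-sets as $G_m$. First one identifies the set $\{s_1,\dots,s_m\}$ as the vertices $v$ for which every $k$-subset containing $v$ is connected; a short argument (using a $k$-subset of the form $\{v_i,v_{i+2},\dots,v_{i+k}\}$, which skips a path neighbour) shows that no path vertex has this property. Next, the induced subgraph $G^\ast[V(P_{2k})]$ has exactly the connected $k$-subsets of $P_{2k}$ (namely its $k$ consecutive subsets), so adding any chord $\{v_i,v_j\}$ with $|i-j|\geq 2$ would create a new connected $k$-set, and removing any path edge would destroy one; hence $G^\ast[V(P_{2k})]=P_{2k}$. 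Finally, the standard ``universality-or-case-B'' dichotomy for a vertex whose every $k$-subset is connected, combined with a degree count that rules out the case-B configuration (a would-be partner vertex $a$ must have degree $n-2$, which exceeds the degree attainable from the already-determined $P_{2k}$ part for $k\geq 3$), forces each $s_i$ to be universal to $V(P_{2k})$ and non-adjacent to the other $s_j$'s, so $G^\ast=G_m$.

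\textbf{Main obstacle.} The delicate step is the last one: for $m\geq 2$ one must rule out that a $G^\ast$ in which some pair $\{s_i,s_j\}$ is turned into an edge (or vice versa) produces the same collection of connected $k$-sets as $G_m$. Resolving this requires exhibiting, for each candidate alternative edge-set among the $s_i$'s, an explicit $k$-subset whose connectivity changes; depending on the parity and size of $m$ relative to $k$ this may force a small adjustment of the construction (for example, fixing the adjacencies among the $s_i$'s to form a specific rigid subgraph detectable at the $k$-level, or distinguishing the $s_i$'s by attaching to each a private pendant). The $k=2$ case is immediate since any graph is trivially determined by its connected $2$-sets (edges), and Figure~\ref{fig:ambiguous-classes} already exhibits infinite families for that case.
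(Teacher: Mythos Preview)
Your construction has a genuine gap that is not just the ``main obstacle'' you flag at the end---it already kills the uniqueness claim at level $k$ for every $m\geq 2$ and $k\geq 3$. Take $G_m=P_{2k}*\overline{K_m}$ and let $G_m^+$ be $G_m$ with the single extra edge $\{s_1,s_2\}$ added. Then $G_m$ and $G_m^+$ have \emph{identical} connected $k$-sets: any $k$-set meeting $V(P_{2k})$ is connected in both through the join, any $k$-set contained in $V(P_{2k})$ is unchanged, and any $k$-set $T\subseteq\{s_1,\dots,s_m\}$ has at most one edge in $G_m^+$, hence is still disconnected since $k\geq 3$. So $G_m$ is \emph{not} determined by its connected $k$-sets, and your family does not witness the proposition. (Your identification of the $s_i$'s as ``vertices contained only in connected $k$-sets'' also fails once $m\geq k$, since $\{s_1,\dots,s_k\}$ is disconnected.) The suggested patch of private pendants does not help for $k\geq 4$: adding a pendant $p_i$ to each $s_i$ still leaves, e.g., $\{p_1,s_1,s_2,\dots,s_{k-1}\}$ disconnected both before and after inserting $\{s_1,s_2\}$.

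The paper's construction avoids exactly this pitfall by replacing your independent set $\overline{K_m}$ with a second \emph{path} $P'$ on $n\geq 2k$ vertices and taking $G=P_{2k}*P'$. The point is that $P'$, being a long enough path, is itself determined by its connected $k$-sets (again by Theorem~\ref{treethreshold}), so no edge inside $P'$ can be toggled without changing some $k$-set contained in $V(P')$. The cross-edges are then recovered by testing sets $\{p\}\cup I$ with $I\subseteq V(P')$ independent of size $k-1$. In short, the second block must carry enough internal structure to be rigid at level $k$; an independent set has none.
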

\begin{proof}
    The claim is clear for $k=2$. Let $k\geq 3$. Let $P$ be a path on $2k$ vertices and let $P'$ be a path on $n\geq 2k$ vertices. We add all edges between vertices of $P$ and $P'$ and let $G$ be the resulting graph. If a set $S$ intersects both $P$ and $P'$, then $G[S]$ is always connected, so we can only reconstruct $G$ from its connected $\ell$-set if we can reconstruct $P$ from its own connected $\ell$-sets. By Theorem \ref{thm:tree-reconstruction}, we can reconstruct $P$ from the connected $\ell$-sets of $P$ if and only if $\ell\leq k$. This means we may confuse $G$ with other graphs when given the connected $(k+1)$-sets. On the other hand, given the connected $k$-sets we can find the (non)-edges of $P$ and $P'$, and then discover they are fully connected since $G[\{p\}\cup I]$ is connected for any $p\in P$ and $I\subseteq P'$ independent set of size $k-1$.
\qed \end{proof}

We next prove the properties about the random graph.
\repeatlemma{randomgraphproperties}
(We could improve on the constants in definition of random-like, but we chose our constants for readability purposes.)
For the proof, we use Chernoff bounds as stated below.
\begin{lemma}[Chernoff bound]
\label{lem:chernoff}
Let $X_1,X_2,\dots,X_n$ be independent random variables taking values in $\{0,1\}$ with $\P(X_i=1)=p$. Let $X=\sum_{i=1}^n X_i$ and $\mu=\E[X]=pn$. Then, for any $\epsilon>0$,
\[
\P(|X-\mu|\geq \epsilon\mu)\leq 2\exp(-\epsilon^2\mu/3).
\]
\end{lemma}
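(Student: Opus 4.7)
The plan is to show that each of the three properties defining random-likeness fails with probability $o(1)$, and then conclude by a union bound. Properties (1) and (2) will follow from Chernoff bounds (Lemma \ref{lem:chernoff}), while property (3) will be handled by a direct union bound over pairs of subsets of suitable size.

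For property (1), I would note that for a fixed vertex $v$, $|N(v)|$ is distributed as $\mathrm{Bin}(n-1,\tfrac12)$ with mean $\mu_1=(n-1)/2$. Taking $\epsilon = 3\sqrt{n\log n}/\mu_1 = \Theta(\sqrt{\log n/n})$, one has $\epsilon^2\mu_1/3 = \Omega(\log n)$, so Chernoff gives $\P(||N(v)|-n/2| > 3\sqrt{n\log n}) \leq n^{-C}$ for any desired constant $C>1$ (taking $n$ large enough, absorbing the constant-size shift between $\mu_1$ and $n/2$ into the error term). A union bound over the $n$ vertices then gives failure probability $O(n^{1-C})=o(1)$.

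For property (2), I would decompose $|N(v)\cap(V\setminus N(w))|$ for a fixed ordered pair $(v,w)$ of distinct vertices. For each $u\in V\setminus\{v,w\}$, the indicator that $u\sim v$ and $u\not\sim w$ is $\mathrm{Bern}(\tfrac14)$, and these indicators are mutually independent. So the count of such $u$ is $\mathrm{Bin}(n-2,\tfrac14)$ with mean $\mu_2=(n-2)/4$; the full quantity $|N(v)\cap(V\setminus N(w))|$ differs from this count by at most $1$ (since $w\in V\setminus N(w)$, contributing iff $v\sim w$). Applying Chernoff with $\epsilon=\Theta(\sqrt{\log n/n})$ again gives $\epsilon^2\mu_2/3 = \Omega(\log n)$ and hence per-pair deviation probability at most $n^{-C}$; a union bound over the at most $n^2$ pairs yields $o(1)$.

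For property (3), I would set $m=\lceil 2\log_2 n\rceil$ and observe that if any disjoint $A,B$ with $|A|,|B|\geq 2\log n$ witness a violation, then restricting to arbitrary subsets of size $m$ still witnesses a violation; so it suffices to bound the probability for disjoint $A,B$ with $|A|=|B|=m$. For such a fixed pair, the probability that there are no $A$-$B$ edges equals $2^{-m^2}$, and the number of such pairs is at most $\binom{n}{m}\binom{n-m}{m}\leq (en/m)^{2m}$. The total probability is therefore at most $(en/m)^{2m} \cdot 2^{-m^2}$, whose base-$2$ logarithm is $2m\log_2(en/m) - m^2 = -\Theta(\log n\cdot\log\log n)$ when $m=\lceil 2\log_2 n\rceil$, which tends to $-\infty$. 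Thus property (3) also fails with probability $o(1)$, and a final union bound over the three properties completes the proof.

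The main delicate point is property (3): the crude estimate $\binom{n}{m}\leq n^m$ only yields a probability upper bound of $1$ when $m=2\log_2 n$, so one must use the sharper $\binom{n}{m}\leq (en/m)^m$ to extract the additional $\log\log n$ factor in the exponent that drives the bound to zero. Properties (1) and (2) are, by comparison, routine applications of Chernoff and union bounds.
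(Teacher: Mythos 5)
Your proposal does not address the statement it is meant to prove. The statement is the Chernoff bound itself (Lemma \ref{lem:chernoff}): a concentration inequality for a sum $X=\sum_{i=1}^n X_i$ of independent $\{0,1\}$-valued random variables, asserting $\P(|X-\mu|\geq \epsilon\mu)\leq 2\exp(-\epsilon^2\mu/3)$. What you have written is instead a sketch of the proof of Lemma \ref{randomgraphproperties} (that $G(n,\tfrac12)$ is random-like with high probability), and your argument for properties (1) and (2) explicitly \emph{invokes} Lemma \ref{lem:chernoff} as a black box. Relative to the statement under consideration, the proposal is therefore circular: it assumes the very inequality it is supposed to establish, and contains no argument for it. (As a proof of Lemma \ref{randomgraphproperties} your sketch is essentially the paper's own argument for that other lemma, including the observation that one needs $\binom{n}{m}\leq (en/m)^m$ rather than $n^m$ in property (3); but that is not what was asked.)

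A proof of the lemma as stated would go through the exponential moment method: for the upper tail, apply Markov's inequality to $e^{tX}$ to get $\P(X\geq(1+\epsilon)\mu)\leq e^{-t(1+\epsilon)\mu}\,\E[e^{tX}] = e^{-t(1+\epsilon)\mu}(1-p+pe^t)^n$, optimize over $t>0$ to obtain $\bigl(e^{\epsilon}/(1+\epsilon)^{1+\epsilon}\bigr)^{\mu}\leq \exp(-\epsilon^2\mu/3)$ for $0<\epsilon\leq 1$, argue symmetrically for the lower tail (where the exponent $-\epsilon^2\mu/2$ holds), and combine the two one-sided bounds into the stated two-sided form with the factor $2$. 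Note also that the bound in the form $\exp(-\epsilon^2\mu/3)$ for the upper tail is standard only for $\epsilon\leq 1$; the paper states it ``for any $\epsilon>0$'' but only ever applies it with $\epsilon=\Theta(\sqrt{\log n/n})\to 0$, so this is harmless there, but a self-contained proof should either restrict the range of $\epsilon$ or handle large $\epsilon$ separately. The paper itself states the lemma without proof, treating it as a standard cited fact, so the only substantive feedback is that your text proves a different result.
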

\begin{proof}[Proof of Lemma \ref{randomgraphproperties}]
Fix a vertex $v\in [n]$ and let $u\in [n]\setminus\{v\}$ be another vertex. Let $X_u$ be the indicator function of the event that the edge $\{u,v\}$ is present in $G\sim G(n,\frac12)$. Then $\P(X_u=1)=\frac12$ and $|N(v)|=\sum_{u\neq v}X_u$ is a sum of independent random variables. Hence, by a Lemma \ref{lem:chernoff} (Chernoff bound) with $\mu=(n-1)/2$, $\epsilon=5\sqrt{\log n/n}$ and $X=|N(v)|$, we find for $n$ sufficiently large that
\begin{align*}
\P(||N(v)|- n/2|\geq 3\sqrt{n\log n})&\leq \P(||N(v)|- (n-1)/2|\geq 2.5\sqrt{n\log n})\\
&= \P(|X- \mu|\geq \mu \frac{n}{n-1} 5\sqrt{\log n/n})\\
&\leq \P(|X- \mu|\geq \epsilon \mu)\\
&\leq 2\exp(-\epsilon^2\mu/3)\\
&= 2\exp(-\frac{25}{6}\log n\frac{n-1}{n})\leq 2n^{-2.5}.
\end{align*}
% Used here: log_2(n)=log_e(n)/log_e(2) and (25/6)*log_e(2) *(n-1)/n >2.5 
By a union bound over vertices, the probability that the first property of random-like holds for $G(n,\frac12)$ converges to 1 as $n\to \infty$. 

To obtain the second statement of the definition of random-like, let $v,w\in V$ be distinct vertices, and for $u\in V\setminus\{v,w\}$, let $X_u$ denote the event that $\{u,v\}$ is an edge and $\{u,w\}$ is not an edge. Again, the $X_u$ are independent as they concern distinct edges and $\P(X_u=1)=\frac12\frac12=\frac14$, so $X=|N(v)\cap (V\setminus N(w)|=\sum_{u\in V\setminus\{v,w\}}X_u$ satisfies $\mu=\E[X]=(n-2)/4$.  By a similar calculation, we find that
\[
\P(||N(v)\cap (V\setminus N(w))|-n/4|\geq 3\sqrt{n\log n})\leq 2n^{-3}.
\]
By a union bound over the pairs of vertices, the probability that the second property of random-like holds for $G(n,\frac12)$ converges to 1 as $n\to \infty$. 

It remains to show that the third property holds with high probability.
Given two disjoint sets $A,B \subseteq V$ such that $|A|,|B| \geq 2\log(n)$, we denote by $E_{A,B}$ the event that there is no edge with one endpoint in $A$ and the other in $B$. Note that there are exactly $|A|\cdot|B|$ of such potential edges. We compute
\[
\P(E_{A,B}) = \left(\frac12\right)^{|A|\cdot|B|} \leq \left(\frac12\right)^{4\log^2 n }.
\] 
In order to prove the third property it suffices to prove that for all sets $A,B$ of size exactly $2 \log n$ the event $E_{A,B}$ does not happen (with high probability). As there are less than $\binom{n}{2\log n}^2$ pairs of sets of this size, we find
\[
\P \left(\bigcup_{A,B \in \binom{[n]}{2\log n}} E_{A,B} \right) \leq \binom{n}{2\log n}^2 \left(\frac12\right)^{4\log^2 n} \leq \left(\frac{en}{2\log n}\right)^{4\log n} n^{-4\log n} \leq \left(\frac{e}{2\log n}\right)^{4 \log n}.
\]
The third inequality comes from the well-known bound $\binom{n}{k} \leq \left(\frac{en}{k}\right)^k$. Note that the last term converges to 0 when $n \rightarrow \infty$.

This shows all three properties hold with high probability.
\qed \end{proof}

Next, we turn back to the deterministic setting and prove Theorem \ref{biggirth}, an analog of Theorem \ref{thm:triangle-free} for graphs of high girth, which we restate below.

In this section we define the \emph{girth} of a graph $G$ to be the length of a shortest cycle in $G$. 
For brevity, if $G$ is acyclic we say that $G$ has infinite girth. 

\repeattheorem{biggirth}

The proof of Theorem \ref{biggirth} is similar to that of Theorem \ref{thm:triangle-free}.
For the remainder of this section we take $k \geq 4$ to be a fixed integer.
We begin by showing how to reconstruct the induced cycles from the connected $k$-sets of graphs that are known to have girth at least $k+1$.

\begin{lemma}\label{lem:cycle-unlabelled}
Let $G$ be a graph without cycles of length at most $g$ for some $g \geq 3$.
For any subset $A\subseteq V$ of size $g+1$, $G[A]$ induces a cycle of length $g+1$ if and only if $G[A\setminus \{a\}]$ is connected for all $a\in A$.
\end{lemma}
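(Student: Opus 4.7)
The forward direction is immediate: if $G[A]$ is a cycle of length $g+1$, then deleting any single vertex leaves a path on $g$ vertices, which is connected.

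For the converse, suppose $G[A\setminus\{a\}]$ is connected for every $a \in A$, where $|A| = g+1 \geq 4$. The plan is to first argue that $G[A]$ must already contain a Hamilton cycle of length $g+1$, and then rule out chords.

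First I would observe that $G[A]$ itself is connected: picking any two distinct vertices $a, b \in A$, since $G[A\setminus\{b\}]$ is connected and has at least three vertices, the vertex $a$ has a neighbour in $A\setminus\{a,b\}$. Combined with the hypothesis that $G[A\setminus\{a\}]$ is connected for every $a$, this shows $G[A]$ is $2$-vertex-connected. Every $2$-connected graph on at least three vertices contains a cycle, so let $C$ be a shortest cycle in $G[A]$. By the girth assumption, $C$ has length at least $g+1$; on the other hand, $C$ has at most $|A| = g+1$ vertices. Hence $C$ has length exactly $g+1$ and spans $A$, so $G[A]$ contains a Hamilton cycle.

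The main step is to rule out chords. Suppose for contradiction that $\{u,v\}$ is an edge of $G[A]$ that is not an edge of $C$. The two $u$--$v$ arcs of $C$ are paths $P_1, P_2$ of some positive edge-lengths $\ell_1$ and $\ell_2$ with $\ell_1 + \ell_2 = g+1$. Adding the chord $\{u,v\}$ to $P_i$ yields a cycle of length $\ell_i + 1$, which by the girth hypothesis must satisfy $\ell_i + 1 \geq g+1$, i.e.\ $\ell_i \geq g$ for both $i \in \{1,2\}$. Summing gives $2g \leq \ell_1 + \ell_2 = g+1$, contradicting $g \geq 3$. Therefore $G[A]$ has no chords, and $G[A] = C$ is a cycle of length $g+1$.

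I do not expect any serious obstacle here; the argument is purely structural and the girth condition does essentially all the work in the no-chords step. The only place that needs a little care is confirming $2$-connectivity of $G[A]$ from the ``connectivity after any single deletion'' assumption, which relies on $|A| \geq 4$ (and hence $g \geq 3$) so that each deleted subgraph still has at least three vertices.
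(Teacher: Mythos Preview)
Your proof is correct and takes a genuinely different route from the paper. The paper fixes an $a\in A$, observes that $G[A\setminus\{a\}]$ is a connected graph on $g$ vertices with no cycle of length at most $g$ and hence a tree, and then argues this tree must be a path: if some $t$ had degree $\geq 3$, removing $t$ would split the tree into at least three components, and since $G[A\setminus\{t\}]$ is also connected, the vertex $a$ would have to reach all of them, forcing a short cycle through $a$ and $t$. Since every $G[A\setminus\{a\}]$ is a path, $G[A]$ is a cycle.

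Your argument instead uses the hypothesis globally: the condition ``$G[A\setminus\{a\}]$ connected for all $a$'' together with $|A|\geq 3$ is exactly $2$-connectivity of $G[A]$, which gives a cycle immediately; the girth bound then forces that cycle to be Hamiltonian, and a chord would contradict the girth. This is arguably cleaner and more direct than the paper's degree argument. One small remark: your detour through ``$a$ has a neighbour in $A\setminus\{a,b\}$'' is unnecessary, since the deletion hypothesis already \emph{is} the definition of $2$-connectivity once $|A|\geq 3$; you can state that in one line. The paper's approach, on the other hand, has the mild conceptual advantage that it never invokes $2$-connectivity as a black box and makes the tree/path structure of each $g$-vertex slice explicit, which fits the paper's later use of tree structure in the high-girth section.
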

\begin{proof}
Suppose that $G[A\setminus\{a\}]$ is connected for each $a\in A$. %\linda{Do you really mean for some $a \in A$ not $\forall a \in A$ here? because later you seem to assume that $A \setminus t$ is also connected.}
Let $T$ denote $G \setminus a$ for some fixed $a \in A$.
Since $G$ has no cycles of length $3,\dots,k$, it follows that $T$ is a tree.
We claim that in fact $T$ must always be a path, which proves the result since $a\in A$ was chosen arbitrarily.

If $T$ is not a path, then it has some vertex $t\in T$ of degree at least 3.
By assumption $G[A\setminus \{t\}]$ is also connected, but $G[A\setminus \{t,a\}]$ has at least three connected components $C_1, C_2, \dots, C_\ell$.
Note that both $t$ and $a$ must be adjacent to each connected component of $G[A \setminus \{t,a\}]$.
So $G[C_1\cup C_2\cup \{t,a\}]$ contains a cycle.
But it also has at most $g$ vertices since we exclude all vertices of $C_3, \dots, C_\ell$.
This is a contradiction. 
\qed \end{proof}
Applying the lemma above, we can recognise the $(k+1)$-cycles, that is, we can reconstruct collection \[
\{\{s_1,\dots,s_{k+1}\}\subseteq V:G[\{s_1,\dots,s_{k+1}\}]\text{ induces a $(k+1)$-cycle}\}.
\]
Let $S=\{s_1,\dots,s_{k+2}\}$ denote a set of $k+2$ vertices for which no subset of size $k+1$ appears in the collection above. Then we know $G[S]$ has no cycles of length $3,\dots,k+1$, applying Observation \ref{obs:monotonicity} we can also recognise the $(k+2)$-cycles. Continuing this way, we  may recognise all subsets of vertices of $G$ that induce a cycle from the connected $k$-sets, if we know $G$ to have no cycles of length at most $k$. In particular, we obtain the following corollary to Lemma \ref{lem:cycle-unlabelled}.
\begin{corollary}\label{corr:tree}
Let $G$ be a graph which we know has no cycles of length at most $k$. Then we can recognise all subsets that induce a cycle from the connected $k$-sets, and can in particular recognise whether the graph is a tree.
\end{corollary}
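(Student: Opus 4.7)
The plan is to apply Lemma \ref{lem:cycle-unlabelled} iteratively, climbing from small cycles to large ones. Throughout, recall that by Observation \ref{obs:monotonicity} the connected $k$-sets determine the connected $j$-sets for every $j \geq k$, so we can decide connectedness of any subset of size at least $k$.

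For the base case, consider subsets $A \subseteq V$ of size $k+1$. Since $G$ has no cycles of length at most $k$ by hypothesis, Lemma \ref{lem:cycle-unlabelled} applies with $g = k$ directly to the ambient graph $G$: we have $G[A] \cong C_{k+1}$ if and only if $G[A \setminus \{a\}]$ is connected for every $a \in A$. Each such query is about a $k$-set, so it is answerable from the input.

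For the inductive step, suppose we have recognised every cycle-inducing subset of size at most $j-1$, for some $j \geq k+2$. Let $A$ have size $j$. If some strict subset $A' \subsetneq A$ satisfies $G[A'] \cong C_{|A'|}$, then $G[A]$ is not itself an induced $C_j$, because a cycle of length $j$ has no proper induced subgraph that is a cycle. Otherwise, $G[A]$ has no induced cycle of length in $\{3,\dots,j-1\}$; since every cycle of a graph contains an induced cycle on some subset of its vertices, this rules out even non-induced cycles of length at most $j-1$ in $G[A]$. Thus $G[A]$, viewed as its own ambient graph, has no cycles of length at most $j-1$, and we may apply Lemma \ref{lem:cycle-unlabelled} to $G[A]$ with $g = j-1$: we conclude that $G[A] \cong C_j$ if and only if $G[A \setminus \{a\}]$ is connected for every $a \in A$. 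Each of these queries concerns a $(j-1)$-set with $j-1 \geq k$, so it is decidable from the input.

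The main subtlety is the justification for re-applying Lemma \ref{lem:cycle-unlabelled} at each level: the hypothesis on girth must hold for the graph to which the lemma is applied, and this is exactly what the bookkeeping above secures, using the equivalence between containing any cycle of length $\ell$ and containing an induced cycle on some subset of size at most $\ell$. Once the collection of cycle-inducing subsets is fully determined, the tree case is immediate: because all graphs in the paper are assumed connected, $G$ is a tree if and only if no subset of $V$ induces a cycle, a condition we can now verify directly.
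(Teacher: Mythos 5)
Your proposal is correct and follows essentially the same route as the paper: iteratively apply Lemma~\ref{lem:cycle-unlabelled} with increasing $g$, using Observation~\ref{obs:monotonicity} to answer the larger connectivity queries and the previously recognised shorter induced cycles to certify that the girth hypothesis holds at each level. Your explicit justification for re-applying the lemma to $G[A]$ as its own ambient graph is exactly the (more tersely stated) bookkeeping in the paper's ``continuing this way'' argument.
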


We extend Corollary \ref{corr:tree} by showing we can \emph{reconstruct} the order of the labelled vertices of the cycles as well. We begin by proving the result for cycles of length at least $k+2$.

\begin{lemma}\label{lem:cyclelabel}
Let $G$ be a graph that is known to have girth at least $k +1$.
Then, we can reconstruct every cycle in $G$ of length at least $k+2$ from the connected $k$-sets of $G$.
\end{lemma}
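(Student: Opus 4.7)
The plan is to combine Corollary~\ref{corr:tree}, which already lets us identify every vertex subset $V(C) \subseteq V(G)$ that induces a cycle in $G$, with a counting argument that recovers the cyclic order of the vertices of each induced cycle of length $\ell \geq k+2$. It suffices to handle induced cycles, since the corollary does not distinguish non-induced cycles among the connected $k$-sets and the lemma is only needed to feed into the full edge reconstruction.

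Fix $V(C)$ with $|V(C)| = \ell \geq k+2$ identified as inducing a cycle. The key structural claim is that a subset $S \subseteq V(C)$ with $|S| = k$ satisfies $G[S]$ connected if and only if $S$ consists of $k$ cyclically consecutive vertices of $C$. The ``if'' direction is immediate. For the converse, $|S| = k$ is strictly less than the girth $k+1$, so $G[S]$ is acyclic; and because $C$ is induced, the only edges of $G$ inside $V(C)$ are the edges of $C$, so $G[S]$ is a subgraph of $C$ restricted to $S$. A connected subgraph of a cycle on $k < \ell$ vertices must be a single arc of consecutive vertices (no wrap-around is possible since $k<\ell$).

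With the claim in hand, for each pair of distinct vertices $u, v \in V(C)$ define $N(u,v)$ as the number of connected $k$-subsets of $V(C)$ that contain both $u$ and $v$; this number is computable from the connected $k$-sets of $G$. By the claim it equals the number of $k$-arcs of $C$ containing both. Viewing the set of starting indices of $k$-arcs that contain a given vertex as a cyclic interval of length $k$ in $\mathbb{Z}/\ell\mathbb{Z}$, and noting that the two intervals for $u$ and $v$ are shifted by the cyclic distance $d$ between them, a direct computation gives
\[
N(u,v) \;=\; \max(0,\,k-d) \;+\; \max(0,\,k+d-\ell),
\]
where the second summand accounts for $k$-arcs that meet both $u$ and $v$ by wrapping around the ``long'' side of the cycle (relevant only when $\ell$ is not much bigger than $k$). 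Using the hypothesis $\ell \geq k+2$, one checks that $N(u,v) = k-1$ forces $d = 1$ and conversely, so the edges of $C$ are precisely the pairs achieving the value $k-1$, giving the cyclic order.

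The main technical point is verifying the last check: when $\ell$ is close to $k+2$, pairs at distance $d \geq 2$ may accumulate arcs from both sides of $C$, and one must confirm that their total never reaches $k-1$. This follows from the bound $\max(0,k-d) + \max(0,k+d-\ell) \leq 2k - \ell \leq k-2$ when both summands are positive, and the obvious bound $k-d \leq k-2$ when only the first is positive. The remaining steps are routine once the induced-cycle structure is available via Corollary~\ref{corr:tree}.
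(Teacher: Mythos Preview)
Your argument is correct but takes a more elaborate route than the paper. The paper observes that for distinct $x,y\in V(C)$ on an induced cycle $C$ of length $\ell\geq k+2$, the vertices $x$ and $y$ are adjacent on $C$ if and only if $G[V(C)\setminus\{x,y\}]$ is connected; since $|V(C)\setminus\{x,y\}|\geq k$, this connectivity is determined by the connected $k$-sets via Observation~\ref{obs:monotonicity}. That is essentially a two-line proof with no counting. Your approach instead counts how many connected $k$-subsets of $V(C)$ contain a given pair and shows this count equals $k-1$ exactly for adjacent pairs. Both arguments rest on the same structural fact (that the only edges of $G$ inside $V(C)$ are the cycle edges, so connected $k$-subsets of $V(C)$ are precisely the $k$-arcs), but the paper's removal test is cleaner and sidesteps the wrap-around case analysis. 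Your counting has the minor virtue of using only size-$k$ sets directly rather than invoking monotonicity, but that buys nothing here since Observation~\ref{obs:monotonicity} is already available.
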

\begin{proof}
    By Corollary \ref{corr:tree} we can identify every set of vertices that induces a cycle in $G$.
    Suppose $V(C)$ is a set of vertices that induces a cycle of $C$ of length at least $k+2$.
    Note for each $x,y \in V(C)$ the vertices $x$ and $y$ are adjacent if and only if the $C \setminus \{x,y\}$ is connected, and that $|V(C) \setminus \{x,y\}| \geq k$. By monotonicty (Observation \ref{obs:monotonicity}), the connected $k$-sets of $G$ determine whether $C \setminus \{x,y\}$ is connected, and so we can discover the order of the vertices on the cycle.
\qed \end{proof}
Since every subset of $k$ vertices of a cycle of length $k+1$ is connected, we cannot apply the same argument to reconstruct the cycles of length $k+1$. However, since we only consider graphs which are known to have at least $k +2$ vertices, we can obtain additional information about the order of vertices cycle of length $k+1$ by looking at their relationships with other vertices in the graph.
\begin{lemma}\label{lem:cycles-and-labels}
    Let $G$ be a connected graph that is known to have girth at least $k+1$. Suppose $C$ is an induced cycle in $G$. Then, we can reconstruct $C$ from the connected $k$-sets of $G$.
\end{lemma}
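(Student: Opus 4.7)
My plan starts by observing that the case $|V(C)|\geq k+2$ is already handled by Lemma \ref{lem:cyclelabel}, so I would focus on the remaining case $|V(C)|=k+1$. By Corollary \ref{corr:tree} I can identify the vertex set $V(C)$ as a labelled subset, and since $G$ is connected with $|V(G)|\geq 2k-1>k+1$ (the setting of Theorem \ref{biggirth}), there is at least one vertex $x\in V(G)\setminus V(C)$ adjacent to a vertex of $V(C)$. The overall strategy is to use such an external vertex $x$ as a ``probe'' to tease out the cyclic ordering of $V(C)$ via $k$-set queries.

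The first key step is a girth argument showing that any $x\notin V(C)$ has \emph{at most one} neighbor on $C$. If $x$ were adjacent to $v_i$ and $v_j$, then $x$ together with the shorter arc of $C$ between them would close a cycle of length at most $\lfloor(k+1)/2\rfloor+2$, which for $k\geq 4$ is at most $k$, contradicting the girth hypothesis.

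Having fixed $x$ with unique $C$-neighbor $v^*$, the second key step is the query. For every unordered pair $\{v,w\}\subseteq V(C)$ I would look up whether the $k$-set $T_{v,w}=\{x\}\cup(V(C)\setminus\{v,w\})$ induces a connected subgraph (this is read off directly from the input). Since the girth forbids any chord inside $V(C)$, the subgraph $G[V(C)\setminus\{v,w\}]$ is a single path if $\{v,w\}\in E(C)$ and a disjoint union of two paths otherwise. Combined with $x$'s sole neighbor $v^*$, a short case analysis yields the clean characterization that $G[T_{v,w}]$ is connected if and only if $\{v,w\}\in E(C)$ and $v^*\notin\{v,w\}$.

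The final step is to read off $C$: the connected pairs are precisely the $k-1$ edges of the path $P=C-v^*$, so I recover $v^*$ as the unique vertex of $V(C)$ missing from every such pair, identify the two degree-$1$ vertices in this edge set as the $C$-neighbors of $v^*$, and close $C$ by adding the two edges from $v^*$ to them. To locate a suitable $x$ at the outset, I would enumerate $V(G)\setminus V(C)$ and pick any vertex yielding at least one ``connected'' pair; connectedness of $G$ together with $|V(G)|>|V(C)|$ guarantees existence. The main obstacle, and the heart of the argument, is the girth step: without the uniqueness of $v^*$, a probe vertex with several neighbors on $C$ would conflate signals from different arcs, and the characterization of connected pairs as $E(C-v^*)$ would break down.
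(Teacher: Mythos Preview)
Your proposal is correct and follows essentially the same route as the paper: reduce to $|V(C)|=k+1$, pick an external vertex $s$ with a (necessarily unique, by girth) neighbour $v^*$ on $C$, and use the $k$-sets $\{s\}\cup(V(C)\setminus\{v,w\})$ to recover the path $C-v^*$ and hence $C$. The only cosmetic difference is order of extraction: the paper first identifies $v^*$ as the vertex lying in every connected $\{s\}\cup M$ with $|M|=k-1$ and then reads off the edges of $C-v^*$, whereas you run all the pair-queries and recover $v^*$ afterwards as the vertex missing from every connected pair.
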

\begin{proof}
    By Lemma \ref{lem:cyclelabel} we may assume $C$ has length $k+1$.
    By Corollary \ref{corr:tree}, we can recognise the vertex set of $C$.
    A vertex $w \not \in V(C)$ is a neighbour of a vertex in $V(C)$ if and only if there is some connected $k$-set containing $w$ and $k-1$ vertices of $C$.
    Hence, we can identify the set $S$ of vertices in $V(G) \setminus V(C)$ that have at least one neighbour in $V(C)$.
    Since $G$ is connected, $S \neq \emptyset$.
    
    Let $s \in S$ be arbitrary. Since $G$ has no cycles of length at most $k$, it follows $s$ has exactly one neighbour in $V(C)$. We can identify the unique neighbour $v\in V(C)$ of $s$ as the only vertex that appears in all subsets $M\subseteq V(C)$ of size $k-1$ for which $\{s\}\cup M$ is connected.
    
    We use this to determine the order of the labels of the vertices in $G$ in a similar fashion to Lemma \ref{lem:cyclelabel}.
    Distinct vertices $x,y \in V(C)$ are adjacent if and only if $C \setminus \{x,y\}$ induces a connected graph. Note $V(C) \setminus \{x,y\}$ contains only $k-1$ vertices, and since $v$ is the unique neighbour of $s$ in $V(C)$, the following statement holds. Two distinct vertices $x, y \in V(C) \setminus \{v \}$ are adjacent if and only if $C \cup \{s \} \setminus \{x,y\}$ is a connected $k$-set.
    Thus, we can reconstruct the (non)-edges between vertices in $C \setminus \{v \}$. The graph induced by
    $C \setminus \{v\}$ is a path and the two degree-one vertices of $C \setminus \{v\}$ are the two neighbours of $v$ in $V(C)$.
    It follows that we can reconstruct $C$.
\qed \end{proof}

We now consider how to reconstruct the other edges (which are not in a cycle) of a graph with a smallest cycle of size at least $k+1$.
Let $C$ be a cycle and let $T$ be a rooted tree on at least two vertices.
We say the graph obtained by identifying the root of $T$ with a vertex in $C$ is a \emph{tree rooted on a cycle}.

\begin{lemma}\label{lem:tree-rooted-at-cycle}
Let $G$ be a graph that is known to have girth at least $k+1$.
Then if $G$ contains exactly one cycle and if $|V(G)| \geq k+2$ then $G$ can be reconstructed from its connected $k$-sets.
\end{lemma}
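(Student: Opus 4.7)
The plan is to first reconstruct the unique cycle $C$ of $G$ (with its labelling) using Lemma \ref{lem:cycles-and-labels}, and then iteratively reconstruct the forest hanging off $C$ by processing non-cycle vertices in order of their distance to the cycle. Since $G$ is unicyclic, every non-cycle vertex $w$ has a well-defined depth $d(w)\geq 1$ (its distance to $V(C)$) and a well-defined parent (its unique neighbour on the shortest path to $V(C)$). Writing $D_0=V(C)$, letting $D_d$ be the set of vertices at depth $d$, and setting $D_{\leq d}=D_0\cup\cdots\cup D_d$, I will identify $D_d$ together with the parents of its elements by induction on $d$. A crucial feature of this plan is that every connectivity query will be made on a set containing $V(C)$, so every query set has size at least $|V(C)|\geq k+1$ and is therefore answerable from the connected $k$-sets by Observation \ref{obs:monotonicity}.

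The heart of the argument is that $G[D_{\leq d-1}]$ is connected (it is the cycle together with the partial forest up to depth $d-1$) and that every vertex $x\in D_{d-1}$ is a leaf of this induced subgraph, since its only neighbour in $D_{\leq d-1}$ is its parent in $D_{d-2}$ while its children all lie in $D_d$. In particular, removing any single $x\in D_{d-1}$ from $D_{\leq d-1}$ leaves a connected set. For each $w\notin D_{\leq d-1}$, I then detect whether $d(w)=d$ by asking whether $G[D_{\leq d-1}\cup\{w\}]$ is connected: this happens iff $w$ has some neighbour in $D_{\leq d-1}$, which by unicyclicity must be its parent, and the parent lies in $D_{\leq d-1}$ iff $d(w)=d$. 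To then identify the parent $x$ of such a $w$, I look for the unique $x\in D_{d-1}$ for which $G[(D_{\leq d-1}\setminus\{x\})\cup\{w\}]$ is \emph{disconnected}: when $x$ is indeed the parent, removing it strips $w$ of its only neighbour in the query set, whereas any other choice of $x$ leaves the parent intact and the set connected.

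The point I expect to have to argue most carefully is the structural claim that a non-cycle vertex $w$ cannot have a neighbour in $D_{\leq d-1}$ other than its parent. This is precisely where the assumption that $G$ contains exactly one cycle enters: any additional edge from $w$ to a vertex of $C$, to a different tree of the forest, or to a non-parent ancestor in its own tree, could be combined with the path from $w$ up through its ancestors and (if needed) around $C$ to produce a second cycle in $G$, contradicting unicyclicity. Once this is established, the induction terminates as soon as $D_{\leq d}=V(G)$, at which point $C$ together with the collected parent edges fully determines $G$. The assumption $|V(G)|\geq k+2$ is used only to ensure that the set-up of Lemma \ref{lem:cycles-and-labels} applies and that, when $V(G)\neq V(C)$, the inductive step has some $w$ to process.
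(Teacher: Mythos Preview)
Your approach is correct and takes a somewhat different route from the paper. The paper constructs, for each vertex $u$ at depth $\ell-1$, a connected set $S$ of exactly $k-1$ vertices (a path from $u$ back to and along $C$) in which $u$ is the \emph{only} vertex at that depth, and then tests each candidate $x$ via the size-$k$ query $S\cup\{x\}$. You instead query very large sets (essentially all of $D_{\leq d-1}$, plus or minus one vertex) and appeal to Observation~\ref{obs:monotonicity} to answer them from the connected $k$-sets. Your version sidesteps the careful construction of $S$; the paper's version stays with queries of size exactly $k$ and never needs monotonicity.

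One point to tighten: your blanket claims that ``every query set contains $V(C)$'' and that ``every $x\in D_{d-1}$ is a leaf of $G[D_{\leq d-1}]$'' both fail at $d=1$, since then $D_{d-1}=V(C)$ and the parent-identification query removes a cycle vertex. The conclusions you actually need still hold there---removing a single vertex from the cycle leaves a path, so $(V(C)\setminus\{x\})\cup\{w\}$ still has size $|V(C)|\geq k+1$ and induces a connected subgraph precisely when $x$ is not the unique cycle-neighbour of $w$---but you should either treat $d=1$ separately or rephrase those two claims so that they cover it.
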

\begin{proof}
    Let $G$ be a graph of girth at least $k+1$ on at least $k+2$ vertices. We may assume that $G$ contains exactly one induced cycle $C$, and we can recognise that $G$ has this cycle, as well as reconstruct it, by Lemma \ref{lem:cycles-and-labels}.
    For each $u \in V(G) \setminus V(C)$ we say the \emph{distance} from $u$ to $C$ is the length of the shortest path from $u$ to a vertex in $V(C)$.

    For $v \in V(C)$ we can identify a connected set $S$ of $k-1$ vertices of $C$ containing it.
    Moreover, $u \not \in V(C)$ is adjacent to $v$ if and only if $S \cup \{u\}$ is connected.
    Hence, we can reconstruct the graph induced by $C$ and all vertices at distance one from $C$.
    
    Let $\ell > 1$ be an integer.
    Let $X$ be the set of vertices of distance at least $\ell$ from $C$ in $G$ and suppose $X \neq \emptyset$.
    Suppose by induction (on $\ell$) that we have reconstructed $G \setminus X$.
    Let $u$ be a vertex at distance $\ell -1$ from $C$.
    Using vertices on the shortest path from $u$ to $C$, and vertices from $C$ as needed, we select a connected set $S$ of $k-1$ vertices of $G \setminus X$ for which $u \in S$ is the only vertex at distance $\ell-1$ from $C$.
    Then for each $x \in X$, the vertex $u$ is adjacent to $x$  if and only if $S \cup \{x \}$ is connected. This way we reconstruct the subgraph of $G$ consisting of vertices at distance at most $\ell$, until $X$ is empty and the graph has been reconstructed.
    The lemma follows.
\qed \end{proof}

We are now ready to prove Theorem \ref{biggirth}, which we restate below.
\repeattheorem{biggirth}
\begin{proof}
    Let $G$ be a connected graph on at least $2k-1$ vertices. Suppose $G$ is known to have no cycles of length at most $k$.
    By applying Lemma \ref{lem:cycles-and-labels}, we can recognise all subsets that induce a cycle (together with the order of the vertices on the cycle).
    If $G$ is acyclic then $G$ can be reconstructed from its connected $k$-sets by Theorem \ref{treethreshold}.
    Hence, we may assume $G$ contains some cycle.
    
    Then, we can reconstruct the remaining edges of $G$ as follows.
    Let $U$ denote the union of all vertex sets of induced cycles in $G$. 
    Let $C$ be some induced cycle in $G$.
    Then $C$ is the unique cycle of some connected component of $G[V \setminus U \cup V(C)]$. We can reconstruct that connected component using Lemma \ref{lem:tree-rooted-at-cycle} and record the edges we found. Repeating this for every induced cycle, we record all edges of $G$: indeed, each edge must be connected to at least one of the induced cycles since $G$ is assumed to be connected. 
    This completes the proof. 
\qed \end{proof} 
We stress that we cannot always  \emph{recognise} from the connected $k$-sets whether the graph has a cycle of length $3,\dots,k$, even if the graph is very large. Indeed, if we know three vertices $u,v,w$ all have the same neighbours (not counting $u,v,w$ themselves), and form a connected triple, then we can never distinguish whether they induce a triangle or not. 
\end{document}